\newcommand\ignore[1]{}			% Use pdf, png, jpg, or eps§ with pdflatex; use eps in DVI mode
\def\0{{(0)}}
\def\1{{(1)}}
\newcommand{\ran}[1]{\textcolor{purple}{ \sc *** #1 ***}}
\def\bbP{\mathbb{P}}
\def\d{\partial}
\def\mn{{\mu\nu}}
\def\a{\alpha}
\def\b{\beta}
\def\m{\mu}
\def\G{\Gamma}
\def\g{\gamma}
\def\k{\kappa}
\def\im{\operatorname{Im}}
\def\x{\chi}
\def\t{\tau}
\def\Om{\Omega}
\def\onec{\left(\begin{array}{c}}
\def\cend{\end{array}\right)}
\def\cc{\left(\begin{array}{cc}}
\def\ccend{\end{array}\right)}
\def\ccc{\left(\begin{array}{ccc}}
\def\cccend{\end{array}\right)}
\def\cccc{\left(\begin{array}{cccc}}
\def\cccc{\left(\begin{array}{ccccc}}
\def\ccccend{\end{array}\right)}
\def\t{\tau}
\newcommand{\calK}{\mathcal{K}}
\renewcommand\mod[1]{\text{ mod }#1}
\newcommand\numberthis{\addtocounter{equation}{1}\tag{\theequation}}
\tikzset{
particle/.style={thin,draw=black, postaction={decorate},
decoration={markings,mark=at position .5 with {\arrow[black, line width=0.5mm]{stealth}}}},
gluon/.style={decorate, draw=black, decoration={coil,amplitude=4pt, segment length=5pt}},
photon/.style={decorate, decoration={snake}},
singularity/.style={decorate, draw=black, decoration=zigzag}
}
\newcommand{\eq}[1]{\begin{align}#1\end{align}}
\newcommand{\seq}[1]{\begin{align*}#1\end{align*}}
\newcommand{\subeqs}[1]{\begin{subequations}\begin{align}#1\end{align}\end{subequations}}
\def\Z{\ensuremath{\mathbb{Z}}}
\newcommand{\SL}{\operatorname{SL}}
\newcommand{\sltz}{\SL(2,\Z)}
\theoremstyle{definition}
\newcommand{\smallmat}[1]{\left(\begin{smallmatrix}#1\end{smallmatrix}\right)}
\def\flux{{\text{flux}}}
\renewcommand{\Z}{\mathbb{Z}}
\theoremstyle{theorem}
\title{Supersymmetric Flux Compactifications and Calabi-Yau Modularity}
\author{Shamit Kachru, Richard Nally, and Wenzhe Yang}
\affil{Stanford Institute for Theoretical Physics,\\ Stanford University, Stanford, CA, 94305}
\def\bbZ{\mathbb{Z}}
\def\bbF{\mathbb{F}}
\def\bbP{\mathbb{P}}
\def\bbR{\mathbb{R}}
\def\bbC{\mathbb{C}}
\def\bbQ{\mathbb{Q}}
\def\bR{\reflectbox{R}}
\def\bD{\reflectbox{D}}
\def\gal{\operatorname{Gal}}
\def\qB{\bar{\bbQ}}
\def\het{H_{\text{\'et}}}
\numberwithin{equation}{section}
\def\fr{\operatorname{Fr}}
\def\gal{\operatorname{Gal}}
\def\x{\chi}
\def\bbQb{\bar{\bbQ}}
\def\calK{\mathcal{K}}
\newtheorem{example}{\text{Example}}[section]
\newtheorem{theorem}{Theorem}[section]
\newtheorem{conjecture}[theorem]{Conjecture}
\newtheorem{remark}[theorem]{Remark}
\newtheorem{definition}[theorem]{Definition}
\newtheorem{corollary}[theorem]{Corollary}
\begin{document}
\maketitle

\begin{abstract}
Flux compactification of IIB string theory associates special points in Calabi-Yau moduli space to choices of (pairs of) integral three-form fluxes.  In this paper, we propose that supersymmetric flux vacua are modular.  That is, to a supersymmetric flux vacuum arising in a variety defined over $\mathbb{Q}$, we associate a two-dimensional Galois representation that we conjecture to be modular. We provide numerical evidence for our conjecture by examining flux vacua arising on the octic hypersurface in $\mathbb{P}^{4}(1,1,2,2,2)$.

%Although it has recently been proven that rigid Calabi-Yau threefolds over $\mathbb{Q}$ are modular, the study of modularity of nonrigid %threefolds remains an active field of research. String theory provides several physical mechanisms for associating arithmetic properties %to special fibers in the moduli space of Calabi-Yau threefolds. One example is the notion of a supersymmetric flux compactification. In %this paper, we will establish the modularity of supersymmetric flux vacua. To do so, we will use the Hodge Conjecture to produce a two-%dimensional Galois representation that is conjectured to be modular if it is defined over $\mathbb{Q}$. Since the Hodge Conjecture has %not been proven, we will further provide numerical evidence by examining the octic hypersurface in $\mathbb{P}(1,1,2,2,2)$.

\end{abstract}

\clearpage
\tableofcontents

\section{Introduction}
The Langlands program, proposed by R. Langlands in the sixties, is a series of far-reaching conjectures about the mysterious connections between number theory, geometry and analysis \cite{Langlands}. It relates absolute Galois groups in algebraic number theory to automorphic forms and representations of algebraic groups over local fields and adeles. The last 50 years have seen numerous breakthroughs in this area \cite{langlandsBook}. One of the most important achievements in this direction is the proof of the modularity of elliptic curves \cite{Diamond,Conrad,Taylor2,Wiles,Breuil}, which implies that all elliptic curves over $\bbQ$ are associated to a wieght-two newform. Elliptic curves are Calabi-Yau (CY) one-folds, so it is natural to ask whether these results can be generalized to higher-dimensional Calabi-Yau varieties.

At its core, modularity is about the Galois representations associated to the \'etale cohomologies of algebraic varieties. The easiest way to establish modularity for higher dimensional varieties is to find varieties whose associated representations contain a subrepresentation ``similar" (in a technical sense) to those associated to elliptic curves, and then apply existing results to the subrepresentation to prove modularity. This approach has been applied to the two-dimensional case of K3 surfaces to prove the modularity of singular K3s.  More precisely, the transcendental cycles of a singular K3 surface generate a two-dimensional Galois representation that is modular, and associated to it is a wieght-three newform \cite{Schutt,livne}. In dimension three, the first known result is the modularity of rigid CY threefolds; as proven in \cite{yui:rigid}, rigid threefolds are associated to weight-four modular forms.

CY threefolds play a central role in physics, where they feature prominently in string theory. In this paper, we will relate string theory to the modularity of some nonrigid threefolds.  Flux compactification of IIB string theory picks out special points in the complex structure moduli space ${\cal M}(X)$ of a CY threefold $X$ (for reviews, see for instance \cite{rev1,rev2,rev3,rev4}). These points $\phi$ are determined by the choice of two elements $f$ and $h$ of the middle integral cohomology of $X$. In a subset of flux compactifications known as supersymmetric flux compactifications, the complex structure of $X_\phi$ aligns such that $f$ and $h$ span a two-dimensional subspace of the middle singular cohomology which is purely of Hodge type (2,1) + (1,2)\footnote{Though this statement may seem unfamiliar, we prove shortly that it is implied by the usual criteria specifying a supersymmetric flux vacuum.}. This split, combined with the Hodge conjecture, implies the existence of a two dimensional Galois sub-representation, which, if defined over $\bbQ$, is known to be modular and associated to a weight-two cuspidal Hecke eigenform $f_2$ for some congruence subgroup $\Gamma_0(N)$, where the level $N$ should be determined by the primes of bad reduction of $X_\phi$. Thus, threefolds $X_\phi$ defined over $\bbQ$ and admitting a supersymmetric flux compactification are excellent candidates for weight-two modularity.\ignore{is potentially modular!} \footnote{An argument which is similar in spirit -- but physically and mathematically distinct for general $X$ -- was recently used by Candelas, de la Ossa, Elmi, and van Straten to study the modularity of rank-two attractor points (singled out by the ``attractor mechanism" of black hole physics) in a one-parameter family of Calabi-Yau threefolds \cite{candelas:attractors}.}

The Hodge conjecture remains unproven, so we will support this claim with a rich example. In particular, we will carefully consider the octic CY threefold in $\bbP^{4}(1,1,2,2,2)$. This manifold admits a one-dimensional family of supersymmetric flux vacua \cite{dewolfe:11222}. By the argument above, we expect that some rational points in this family should be associated to a weight-two Hecke eigenform. Using the calculation of the $\zeta$-function of this family of threefolds in \cite{kadir:octic}, we have verified modularity for a number of rational points. To the best of our knowledge, the modularity of these points in moduli space was not previously known, but in principle these relationships are already present in \cite{kadir:octic}. Without the analysis in the previous paragraph, however, there is no reason to suspect that rational points along the supersymmetric locus would be modular in this way. We therefore expect that our analysis will help in developing the study of potential connections between physics and the modularity of Calabi-Yau manifolds.

 The outline of this paper is as follows. We begin by reviewing some basic aspects of modularity in Section \ref{sec:modularity}. We then introduce flux compactifications, including supersymmetric vacua, in Section \ref{sec:flux}. The facts in these two sections will enable us to state our main idea, which we do in Section \ref{sec:main}. We support this argument with the example of the octic in Section \ref{sec:octic}, before concluding in Section \ref{sec:conclusion} with discussion and outlook. We also provide several appendices. In Appendix \ref{sec:mirror}, we review the relationship between the $\zeta$-function of a manifold and that of its mirror, which will be used throughout the paper. In Appendix \ref{sec:wenzhe}, we provide a brief overview of the arithmetic and algebraic geometry necessary to make the arguments of Section \ref{sec:main} rigorous. Finally, in Appendix \ref{sec:data} we provide tables of data supporting the modularity claims of Section \ref{sec:octic}.

\section{The Modularity of Calabi-Yau Varieties}
\label{sec:modularity}
In this section we will briefly review the relationship between algebraic varieties and automorphic forms, known simply as modularity. We will focus on elliptic curves and Calabi-Yau threefolds defined over $\mathbb{Q}$; readers are referred to \cite{langlandsBook} for a much more general introduction to the Langlands program. This is the basic setting for our work, but may be unfamiliar to physicists, so we will review it in some detail; readers  familiar with this material can skip this section completely. We will take a fairly concrete perspective, and focus on counting points on projective varieties; a much more abstract perspective is provided in Appendix \ref{sec:wenzhe}. First, in Section \ref{sec:modEll}, we will review the famous modularity theorem for elliptic curves, following \cite{ribet:review}. Next, in Section \ref{sec:modThree}, we will discuss recent progress in the study of modularity for threefolds, following \cite{meyer:book,yui:review}. Finally, in Section \ref{sec:WeilConj} we will provide an alternative perspective on these results that will be essential later.

Before we proceed, let us review some basic facts about modular forms; for a more thorough introduction see e.g. \cite{123}. A weight-$w$ modular form  for a discrete subgroup $\Gamma\subset\operatorname{SL}(2,\bbR)$ is a holomorphic function $f:\mathbb{H}\to\bbC$ satisfying the functional equation \eq{f\left(\frac{a\t+b}{c\t+d}\right) = \left(c\t+d\right)^wf(\t)} for all $\smallmat{a&b\\c&d}\in\Gamma$. For our purposes, we will always take $\G$ to be of the form $\G_0(N)$ for some positive integer $N$, i.e. the matrix group \eq{\G_0(N) := \left\{\left.\left(\begin{array}{cc} a&b\\c&d\end{array}\right)\in\operatorname{SL}(2,\bbZ) \right| c\equiv 0\text{ mod }N\right\}.} The matrix $T\equiv\smallmat{1&1\\0&1}$, which is a generator of $\sltz$, is in $\G_0(N)$ for all $N$, so all modular forms for $\G_0(N)$ have a Fourier expansion of the form \eq{f(\t) = \sum_{n=0}^\infty c_nq^n,} where $q=\exp\left(2\pi i\t\right)$. We say that $f$ is a cusp form if it vanishes at the ``cusps" $\{i\infty\cup\bbQ\}$; the condition that $f$ vanishes at infinity implies that all cusp forms have $c_0=0$. To any cusp form $f$, we can associate an $L$-function $L(f,s)$, defined by \eq{L(f,s) = \sum_{n=1}^\infty c_nn^{-s}.\label{eq:L(f)}}

We write $S_w(N)$ for the vector space of weight-$w$ cusp forms for $\G_0(N)$. The spaces $S_w(N)$ are acted on by endomorphisms $T_n$ for $n\ge1$ known as ``Hecke operators." We say a cusp form $f$ is a Hecke eigenform if it is an eigenvector under all Hecke operators \cite{123}. Having established these basic definitions, we can now describe the connections between counting points on projective varieties and Hecke eigenforms; this will be the content of the remainder of this section.

\subsection{Modularity For Elliptic Curves}
\label{sec:modEll} 

Physicists are most familiar with elliptic curves as complex manifolds, i.e. tori. However, here we will focus on the subset of elliptic curves that can be defined by polynomial equations with rational coefficients; we will say such elliptic curves are defined over $\bbQ$. Any elliptic curve $E$ over $\bbQ$ can be represented in the Weierstrass form as \eq{y^2 + \a_1xy + \a_3y = x^3 + \a_2x^2 + \a_4x+\a_5,\label{eq:weierstrass}} where all of the $\a_i$ are integers. \ignore{It is more convenient to repackage this as a projective polynomial by introducing an auxiliary variable $z$: \eq{y^2z+ \a_1xyz + \a_3yz^2 = x^3 + \a_2x^2z + \a_4xz^2+\a_5z^3\label{eq:weierstrassProj}.}} $E$ is smooth so long as there does not exist a point $(x,y)\in{X}$ at which all partial derivatives of Eq. \ref{eq:weierstrass} vanish.

By simply reducing Eq. \ref{eq:weierstrass} modulo a prime $p$, we obtain a curve over the finite field $\bbF_p$, called $E/\bbF_p$. Even if $E$ is smooth, $E/\bbF_p$ might be singular over $\bbF_p$. There will only be finitely many primes at which this happens; these are called the primes of bad reduction of $E$, or simply bad primes, and the primes at which $E/\bbF_p$ is nonsingular are called good primes. For instance, the elliptic curve \eq{y^2 = x^3 + x^2 -77x -289\label{eq:ellCurve}} has bad reduction at $p=2$ and $p=11$ \cite{interactiveTable}. From the bad primes of $E$, we can define an important arithmetic invariant $N$, called the conductor of $E$.

Over the field $\bbF_p$, Eq. \ref{eq:ellCurve} only has finitely many solutions; we will denote this number as $\#(E,p)$. Importantly, we include the $x=y=\infty$ solution of \ref{eq:ellCurve}. It is convenient to repackage these numbers as \eq{a_p \equiv p + 1 - \#\left(E,p\right).\label{eq:apEll}} 
The modularity theorem \cite{Wiles,Taylor2,Diamond,Conrad,Breuil} states that, for all elliptic curves $X$ over $\bbQ$, there exists a weight-two cuspidal Hecke eigenform \eq{f_E=\sum_{n=1}^\infty c_n q^n} for $\Gamma_0(N)$, where $N$ is the conductor of $E$, such that \eq{a_p=c_p\label{eq:apcpEll}}
for all good primes $p$. In the example of Eq. \ref{eq:ellCurve}, the numbers $a_p$ agree with the Fourier coefficients $c_p$ of the weight-two Hecke eigenform for $\Gamma_0(44)$ with Fourier expansion \cite{interactiveTable} \eq{f(\t) = q + q^{3} - 3q^{5} + 2q^{7} - 2q^{9} +\cdots.} 

The modularity theorem can be stated more succinctly in terms of $L$-functions. For good primes $p$ of $E$, we define a local $L$-factor \eq{L_p(E,t)=\left( 1-a_p t+pt^2 \right)^{-1}. \label{eq:localEll}} We can similarly define a local $L$-factor at a bad prime $p$. In terms of the local $L$-factors, the $L$-function of $E$ is defined as the infinite product \eq{L(E,s)=\prod_{p ~\text{bad}} L_p(E,p^{-s}) \cdot \prod_{p ~\text{good}} L_p (E,p^{-s}). \label{eq:eulerEll}} A priori, this infinite product only converges on the right half plane with sufficiently large $\operatorname{Re}(s)$. Readers are referred to the Appendix \ref{sec:wenzhe} for more details. The modularity theorem can then be written as \eq{L(E,s)=L(f_E,s),\label{eq:modEll}} where $L(f_E,s)$ was defined in Eq. \ref{eq:L(f)}.

\ignore{defined a coefficient $a_p$ in Eq. \ref{eq:apEll}; this definition can be extended to both bad primes $p$ and composite integers $n$, so we can define an $L$-function $L(E,s)$ as \eq{L(E,s) = \sum a_nn^{-s}.} A priori, this sum only converges for sufficiently large $\operatorname{Re}(s)$.

It will be useful to rephrase the definition of $L$-functions as an infinite product. For a good prime $p$ of $E$, we define the local $L$-factor \eq{L_p(E,t)=1-a_p t+pt^2. \label{eq:localEll}} We can similarly define a local $L$-factor at a bad prime $p$. In terms of the local $L$-factors, the $L$-function of $E$ is by definition given by the infinite product \eq{L(E,s)=\prod_{p ~\text{bad}} L_p(E,p^{-s})^{-1} \cdot \prod_{p ~\text{good}} L_p (E,p^{-s})^{-1}. \label{eq:eulerEll}}}

We conclude by briefly mentioning a more abstract perspective on this material. The middle \'etale cohomology $H^1_{\text{\'et}}(E)$ furnishes a two-dimensional representation $\rho_2(E)$ of the absolute Galois group $\gal(\qB/\bbQ)$, i.e. the automorphism group of the maximal algebraic extension of $\bbQ$. Frequently, the best way to study modularity is to study abstract properties of this representation. Indeed, the modular form $f_E(q)$ is itself best thought of as being associated to another representation, which corresponds to $\rho_2(E)$ under the Langlands correspondence. However, we will try to avoid this language, and only invoke it when absolutely necessary.

\subsection{Modularity For Calabi-Yau Threefolds}
\label{sec:modThree}
Elliptic curves are CY onefolds, so it is natural to ask to what extent these modularity results can be generalized to higher-dimensional Calabi-Yau varieties. For K3 surfaces, one analog of the modularity theorem of elliptic curves is known: all singular K3 surfaces are modular \cite{livne}. In singular K3s, the two-dimensional transcendental lattice induces a two-dimensional subspace of the middle \'etale cohomology which is purely of Hodge type $(2,0)+(0,2)$. This causes the Galois representation associated to the middle \'etale cohomology to split into the sum of a two-dimensional subrepresentation and a twenty-dimensional one; the two-dimensional summand is modular, and associated to a weight-three Hecke eigenform.

For threefolds, the situation is much more complicated. The middle \'etale cohomology of a Calabi-Yau threefold $ X$ defined over $\bbQ$, denoted by $\het^3(X)$, furnishes a $b_3=2+2h^{2,1}$-dimensional representation $\rho_{b_3}(X)$ of $\gal(\qB/\bbQ)$. While two-dimensional Galois representations are comparatively well-understood, higher-dimensional representations are less well-studied, and so for threefolds with large $b_3$ these representations are too complicated to study directly. On the other hand, if $X$ is rigid, i.e. it has $h^{2,1}=0$, then $\het^3(X)$ is two-dimensional, and we can study it directly. Indeed, it was proven by Gouvea and Yui that all rigid threefolds over $\bbQ$ are modular \cite{yui:rigid}.

To state this result more precisely, we must define an $L$-function for rigid threefolds. As we did with elliptic curves above, for a good prime $p$ of a rigid threefold $X$ we have a local $L$-factor  \eq{L_p(X,t)=\left( 1-a_pt+p^3t^2 \right)^{-1},\label{eq:localThree}} where \cite{meyer:book} \eq{a_p = p^3 + \left(p^2+p\right)k_p(X) - \#(X,p),\label{eq:ap3}} with $|k_p|<h^{1,1}(X)$. We similarly have local $L$-factors for bad primes, so as before we define the $L$-function by \eq{L(X,s) = \prod_{p~\text{bad}}L_p\left(X,p^{-s}\right) \cdot\prod_{p~\text{good}}L_p\left(X,p^{-s}\right).\label{eq:eulerRigid}} Then the modularity of a rigid threefold $X$ just means that there exists a weight-four Hecke eigenform $f_4(\t) = \sum_n b_nq^n$ for some $\Gamma_0(N)$ such that \cite{yui:rigid} \eq{L(X,s) = L(f_4,s)\label{eq:modRigid}.} Here, $N$ is not simply the conductor of $X$, and indeed does not admit a sharp characterization, but as with elliptic curves it is divisible only by the primes of bad reduction of $X$. As before, we can expand  the infinite product in Eq. \ref{eq:eulerRigid} into an infinite sum of the form \eq{L(X,s) = \sum_n a_n n^{-s}.} Eq. \ref{eq:modRigid} then implies that \eq{a_p = b_p \label{eq:apbpRigid}} for all good $p$.

For nonrigid threefolds, the situation is less nice, and is in general not yet tractable. One can make progress, however, by finding three-dimensional analogs of singular K3 surfaces, i.e. algebraic threefolds associated to two-dimensional subrepresentations. More precisely, suppose the $b_3$-dimensional Galois representation $\rho_{b_3}(X)$ splits into the direct sum \eq{\rho_{b_3}(X) = \rho_2 (X) \oplus \rho_{b_3-2}(X)\label{eq:galoisSplit}.} The summand $\rho_2(X)$ is somewhat similar to the Galois representation associated to a rigid threefold, and was studied in \cite{yui:rigid} with similar techniques. More precisely, if $\rho_2(X)$ is of Hodge type $(3,0)+(0,3)$, then the situation is identical to the rigid case, and $\rho_2(X)$ is associated to a weight-four eigenform; many examples of such a split can be found in e.g. \cite{yui:review,meyer:book}. It was pointed out in \cite{candelas:attractors} that such splits can be related to rank-two attractor points in Calabi-Yau moduli space.

If on the other hand $\rho_2(X)$ is of Hodge type $(2,1)+(1,2)$, then it is still modular. To a Galois representation $\rho$, we can associate a related representation, called the Tate twist\footnote{The Tate twist can be thought of as a way to keep track of more sensitive algebraic information about a variety. For example, given an arbitrary smooth Calabi-Yau threefold $X$ defined over $\mathbb{Q}$, we know $H^0(X,\mathbb{Q})=H^6(X,\mathbb{Q})=\mathbb{Q}$. On the other hand, the \'etale cohomologies are given by $H^0_{\text{\'et}}(X)=\mathbb{Q}_\ell(0)$ and $H^6_{\text{\'et}}(X)=\mathbb{Q}_\ell(-3)$, so because of the Tate twists we are able to distinguish between the two cohomology groups.}  $\rho\otimes\bbQ_\ell(1)$ of $\rho$; readers are referred to Appendix \ref{sec:wenzhe} for more details about Tate twists. If $\rho$ is two-dimensional and has Hodge type $(a,b)+(b,a)$, then $\rho\otimes\bbQ_\ell(1)$ has Hodge type $(a-1,b-1) + (b-1,a-1)$. The Tate twist $\rho_2(X) \otimes \mathbb{Q}_\ell(1)$ of $\rho_2(X)$ is therefore of Hodge type (1,0)+(0,1), which is the same Hodge type as the middle \'etale cohomology of an elliptic curve. We then have that $\rho_2(X) \otimes \mathbb{Q}_\ell(1)$ is modular by \cite{Wiles,Taylor2,Diamond,Conrad,Breuil}. In particular, it is associated to a weight-two eigenform \eq{f_2(\t) = \sum_n c_n q^n.} In this paper, we will relate this notion of modularity to supersymmetric flux compactifications.

\ignore{\begin{remark}
The Tate twist can be considered as a way to keep track of the more sensitive algebraic information of a variety. For example, given an arbitrary smooth Calabi-Yau threefold $X$ defined over $\mathbb{Q}$, we know $H^0(X,\mathbb{Q})=H^6(X,\mathbb{Q})=\mathbb{Q}$. But these two cohomology groups carry different weights. For its \'etale cohomology, $H^0_{\text{\'et}}(X)=\mathbb{Q}_\ell(0)$, which has weight $0$; $H^6_{\text{\'et}}(X)=\mathbb{Q}_\ell(-3)$, which has weight $6$. These different weights indeed show up when we discuss the arithmetic properties of $X$.  \ran{Does the notion of weight show up anywhere else in the main text? I don't think so- I'd shorten it to something like "$H^0(\bbQ) = H^6(\bbQ) = \bbQ$. On the other hand, the \'etale cohomologies are given by $H^0_{\text{\'et}}(X)=\mathbb{Q}_\ell(0)$ and $H^6_{\text{\'et}}(X)=\mathbb{Q}_\ell(-3)$. Thus, the Tate  object $\bbQ(n)$ allows us to keep trackof the more sensitive algebraic information of a variety.} 
\end{remark}}

For nonrigid threefolds, we can again define an $L$-function as a  product over local $L$-factors, exactly as in Eq. \ref{eq:eulerRigid}. The $L$-factor $L_p(X,t)$ associated to a good prime $p$ will now be the inverse of a polynomial of degree $b_3(X)$. If $X$ has an \'etale  split of the kind we have considered here, then the local $L$-factors  will factorize over $\bbZ$, and we will have that \eq{L_p(X,t) = L_p^{(2)}(X,t)L_p^{(b_3-2)}(X,t),}  where $L_p^{(2)}$ is the inverse of a quadratic in $t$ and the inverse of $L_p^{(b_3-2)}$ has degree $b_3-2$. The quadratic factor encodes the Fourier  coefficients of the associated modular form. If $\rho_2(X)$ has Hodge type $(3,0) + (0,3)$ then we will have \eq{L^{(2)}_p(X,t) = \left(1-b_pt + p^3t^2\right)^{-1},} where $b_p$ is the $p$-th Fourier coefficient of a weight-four eigenform $f_4(\t) = \sum_n b_nq^n$, and on the other hand if $\rho_2(X)$ has Hodge type $(2,1) + (1,2)$ then we will have \eq{L_p^{(2)} =\left( 1-c_p(pt)+p(pt)^2 \right)^{-1},} where $c_p$ is the $p$-th coefficient of a weight-2 eigenform $f_2(\t) = \sum_n c_nq^n$.

However, at the level  of point counts, what we compute for a good prime $p$ is not the local  $L$-factor  $L_p(X,t)$, but instead the point count coefficient $a_p$ defined in Eq. \ref{eq:ap3}. Thus, it is more natural to expand the $L$-function as a sum, \eq{L(X,s) = \sum  a_n n^{-s},} and look for an explicit relationship between the $a_p$ and the Fourier coefficients of the associated threefolds, along the lines of Eq. \ref{eq:apbpRigid}. For threefolds associated to a weight-four newform, this is easy:  we simply have \cite{meyer:book} \eq{a_p \equiv b_p\ \operatorname{mod}{p}.} On the other hand, if instead $X$  is associated to a weight-two newform, there is no such simple relationship. This is the notion of modularity we are interested in, so it will be inconvenient for us to use $L$-functions in examples. In the next section, we will encounter a more subtle diagnostic of modularity that will be more useful.

\subsection{$\zeta$-Functions and The Weil Conjectures} \label{sec:WeilConj}
So far, we have attempted to provide a very concrete introduction to $L$-functions and modularity. However, going forward, $L$-functions will not be especially helpful. Instead, we will need to introduce the slightly more abstract construction of the $\zeta$-function.

To begin, let us formalize the notion of a variety over a finite field. Suppose $X$ is an $n$-dimensional\footnote{Here, $n$ is the complex dimension of $X$.} non-singular variety defined over the field $\mathbb{Q}$ of rational numbers; intuitively, $X$ can be thought of as the vanishing locus of one or more polynomial equations with rational coefficients. If we multiply these polynomial equations by the least common multiple of the denominators of their coefficients, we will obtain polynomial equations with integral coefficients that define a variety $\mathcal{X}$ over $\mathbb{Z}$, which is called the integral model of $X$. Given a prime number $p$, modulo $p$ the integral model $\mathcal{X}$ defines a variety over the finite field $\mathbb{F}_p:=\mathbb{Z}/p\mathbb{Z}$, which we will denote by $X/\mathbb{F}_p$. We say $p$ is a good prime of $X$ if the variety $X/\mathbb{F}_p$ is non-singular, and a bad prime otherwise. 

Suppose $p$ is a good prime of $X$ and $m$ is a positive integer. Up to isomorphism there is a unique degree-$m$ extension of $\mathbb{F}_p$ that will be denoted by $\mathbb{F}_{p^m}$. Since $\mathbb{F}_p$ is a subfield of $\mathbb{F}_{p^m}$, the variety $X/\mathbb{F}_p$ is naturally also a variety defined over $\mathbb{F}_{p^m}$. Over $\bbF_{p^m}$, $X/\mathbb{F}_{p^m}$ has a finite number of solutions; we denote this number by $\#(X,p^m)$. We now define the $\zeta$-function $\zeta(X,p,t)$ as the generating series in the formal variable $t$ given by
\begin{equation}
\zeta(X,p,t):=\exp \left( \sum_{m=1}^{\infty} \frac{\#\left(X,p^m\right)}{m}  \,t^m     \right). \label{eq:zetaDef}
\end{equation}
Although a priori this is only a formal power series, the $\zeta$-function enjoys many nice analytic properties; these are summarized by the Weil conjectures. The first conjecture, known as the rationality conjecture and proven by Dwork using $p$-adic analysis \cite{Dwork}, holds that the $\zeta$-function is a rational function of $t$ of the form \begin{equation}
\zeta(X,p,t)=\frac{P_1(X,p,t) \cdots P_{2n-1}(X,p,t)}{P_0(X,p,t) \cdots P_{2n}(X,p,t)}, \label{eq:weilRational}
\end{equation}
where each of the $P_i(X,p,t)$ is a polynomial with integral coefficients; we will frequently denote these polynomials  by $P_i(t)$ if there is no ambiguity. The polynomials $P_0(X,p,t)$ and $P_{2n}(X,p,t)$ take particularly simple forms: \begin{equation}
P_0(X,p,t)=1-t,~P_{2n}(X,p,t)=1-p^nt.\label{eq:p0p2n}
\end{equation}
The orders of the polynomials $P_i$ are determined by the Betti numbers of the complex $n$-fold $X(\bbC)$ defined by the variety $X$:
\begin{equation}
\text{deg}\,P_i(X,p,T)=\text{dim}_{\mathbb{Q}}H^i(X(\mathbb{C}),\mathbb{Q}).
\end{equation}
The polynomials $P_i$ factorize over $\bbC$ as \eq{P_i(X,p,t) = \prod_{j}\left(1-\a_{ij}t\right).} The ``Riemann hypothesis" portion of the Weil conjectures implies that \eq{\left|\a_{ij}\right| = p^{i/2},} as was proven by Deligne \cite{DeligneWeil}.

The polynomials $P_i(t)$ can be determined by the \'etale cohomology of $X$. $X$ can be modeled as one or more polynomials in variables $x_i$. For each prime $p$, there is a natural map called the geometric Frobenius action $\operatorname{Fr}_p$, which takes \eq{x_i\to x_i^p.} The Frobenius acts on the \'etale cohomologies of $X$, and the $P_i$ are the characteristic polynomials of this action: \cite{MilneEC}\begin{equation}
P_i(X,p,t)=\det \left(\text{Id}-t\,\text{Fr}_p|_{H^i_{\text{\'et}}(X_{\overline{\mathbb{Q}}},\mathbb{Q}_\ell)} \right).
\end{equation}
Actually, the Frobenius map has already featured prominently in our discussion of $L$-functions: the local $L$-factors in Eqs. \ref{eq:localEll} and \ref{eq:localThree} are the inverse of the characteristic polynomials of the Frobenius action on the middle \'etale cohomology $\het^n(X,\bbQ_\ell)$ of $X$ (with $t$ replaced by $p^{-s}$).

\ignore{A priory, $\zeta(X,p,t)$ is only a formal power series in $t$, but Weil's conjectures claim that $\zeta(X,p,t)$ is in fact a rational function in $t$ that can be written as
\begin{equation}
\zeta(X,p,t)=\frac{P_1(X,p,t) \cdots P_{2n-1}(X,p,t)}{P_0(X,p,t) \cdots P_{2n}(X,p,t)},
\end{equation}
where each $P_i(X,p,t)$ is a polynomial with integral coefficients. Furthermore, the integral polynomials $P_0(X,p,t)$ and $P_{2n}(X,p,t)$ are of very simple forms
\begin{equation}
P_0(X,p,t)=1-t,~P_{2n}(X,p,t)=1-p^nt.\label{eq:p0p2n}
\end{equation}
This is the rationality part of Weil's conjectures, which is first proved by Dwork using $p$-adic analysis \cite{Dwork}. The variety $X$ defines an $n$-dimensional complex manifold $X(\mathbb{C})$, and Weil's conjectures claim that
\begin{equation}
\text{deg}\,P_i(X,p,T)=\text{dim}_{\mathbb{Q}}H^i(X(\mathbb{C}),\mathbb{Q}).
\end{equation}
The rationality of the zeta function $\zeta(X,p,t)$ can also be proved by the existence of a suitable Weil cohomology theory, e.g. \'etale cohomology theory, while the polynomial $P_i(X,p,t)$ is given by the characteristic polynomial of the (geometric) Frobenius action the \'etale cohomology group $H^i_{\text{\'et}}(X_{\overline{\mathbb{Q}}},\mathbb{Q}_\ell)$ \cite{MilneEC}
\begin{equation}
P_i(X,p,t)=\det \left(\text{Id}-t\,\text{Fr}|_{H^i_{\text{\'et}}(X_{\overline{\mathbb{Q}}},\mathbb{Q}_\ell)} \right).
\end{equation}
Over the complex field $\mathbb{C}$, the polynomial $P_i(X,p,t)$ factors into the products of linear polynomials
\begin{equation}
P_i(T)=\prod_j (1- \alpha_{ij}t).
\end{equation}
The `Riemann hypothesis' part of Weil's conjectures claim that the absolute value of the algebraic number $\alpha_{ij}$ satisfies
\begin{equation}
|\alpha_{ij}|=p^{i/2},
\end{equation}
which is first proved by Deligne \cite{DeligneWeil}. }

It will be instructive to apply this somewhat abstract discussion to the concrete example of a CY threefold $X$ defined over $\bbQ$ with Hodge diamond \begin{equation} \label{eq:HodgeDiamond}
	\begin{tabular}{ c c c c c c c }
		&  &  & 1 &  &  &  \\ 
		&  & 0&   & 0&  &  \\   
		& 0&  & $h^{1,1}$ &  & 0&   \\  
		1&  & $h^{2,1}$&  & $h^{2,1}$ & & 1 \\ 
		& 0&  & $h^{1,1}$ &  & 0&   \\ 
		&  & 0&   & 0&  &  \\   
		&  &  & 1 &  &  &  \\
	\end{tabular},
\end{equation}
where $h^{1,1}:=\text{dim}_{\mathbb{C}}H^{1,1}(X)$ and $h^{2,1}:=\text{dim}_{\mathbb{C}}H^{2,1}(X)$. The only polynomial that survives in the numerator is $P_3(t)$, of degree $b_3(X) = 2 + 2h^{2,1}$. The denominator has four contributions: the linear factors $P_0(t)$ and $P_6(t)$, whose forms are given in \ref{eq:p0p2n}, and the factors $P_{2}(t)$ and $P_4(t)$, each of degree $h^{1,1}$. Putting it all together, we have that \eq{\zeta(X,p,t) = \frac{R(t)}{D(t)},} where the numerator 

\begin{equation}
R(t) = P_3(t)
\end{equation} 
has degree $2+2h^{2,1}$ and the denominator 
\begin{equation}
D(t) = P_0(t) P_2(t) P_4(t) P_6(t)
\end{equation}
has degree $2+2h^{1,1}$. This form of the $\zeta$-function suggests a connection to mirror symmetry; this connection was explored in \cite{candelas:finite2}, and is summarized in Appendix \ref{sec:mirror}. The upshot is that we can sometimes focus on a factor $R_0(t)$ of $R(t)$ coming from the mirror $Y$ of $X$, and thus limit ourselves to studying a lower-order polynomial. 

The numerator $R(t)$ is the characteristic polynomial of the Frobenius on the middle \'etale cohomology $\het^3(X,\bbQ_\ell)$ of $X$. If this cohomology splits, then $R(t)$ will factor over $\bbZ$. More precisely, if, as in Eq. \ref{eq:galoisSplit}, the middle \'etale cohomology contains a two-dimensional summand $\rho_2$, then $R(t)$ will contain a quadratic factor, whose form depends on the Hodge type of $\rho_2$. In particular, if $\rho_2$ has Hodge type $(3,0)+(0,3)$, then $R(t)$ will contain a factor of the form \eq{ 1 - b_p t + p^3t^2 \Big| R(t),} where $b_p$ is the $p$-th Fourier coefficient of a weight-four Hecke eigenform $f_4(\t) = \sum b_n q^n$ associated to $\rho_2$. On the other hand, if $\rho_2$ has Hodge type $(2,1)+(1,2)$, then the factorization will be of the form \eq{ 1 - c_p(pt) + p(pt)^2 \Big| R(t) ,\label{eq:wt2factor}} where $c_p$ is the $p$-th Fourier coefficient of the weight-two Hecke eigenform $f_2(\t) = \sum c_nq^n$ associated to $\rho_2$. This is the criterion which we were looking for in the previous section. We will use a factorization of this form to diagnose weight-two modularity in examples.

\section{Supersymmetric Flux Compactifications}
\label{sec:flux}
In the previous section, we saw that threefolds $X$ whose middle \'etale cohomology splits are modular. However, identifying points in moduli space at which such a split occurs is difficult. The main point of this paper is that supersymmetric flux compactifications provide examples of such points. To explain why, we will need to recall some details about flux compactifications; we will do so following \cite{kachru:gkp,kachru:0411} (which built on the earlier work of \cite{becker:m8fold,dasgupta:mGflux}).

Our basic physical setting is type IIB string theory. At low energies, string theory is well-described by supergravity, a classical theory in which we approximate extended stringy objects as point particles; we will work in the supergravity approximation for the duration of this paper. In the ten dimensional supergravity associated to type IIB string theory, inventively named type IIB supergravity, the basic degrees of freedom are the metric tensor $g_{MN}$, $p$-form gauge potentials $C_0$, $C_2$, and $C_4$, another two-form gauge potential $B_2$, and a scalar called the dilaton $\phi$, as well as several fermionic degrees of freedom required by supersymmetry. To a $p$-form gauge field, we associate a gauge-invariant $(p+1)$-form field strength; we write the field strength of $C_p$ as $F_{p+1}$, and of $B_2$ as $H_3$.

The $p$-form gauge fields are sourced by extended objects that stretch in $p-1$ spatial dimensions in the same way that the one-form gauge field familiar from electromagnetism is sourced by point particles. These objects are called D$(p-1)$ branes. IIB string theory has even-form gauge potentials, and so has D-(odd) branes, e.g. D1, D3, and D5 branes. Of these, D3 branes will be the most relevant. There are also extended objects with negative tension, known as orientifold planes; these will also be extremely relevant to the construction of flux compactifications.

In what follows, it will be convenient to repackage the axion and dilaton into the axiodilaton $\tau$, defined by \eq{\t = C_0+ie^{-\phi},} and then to work with a complexified three-form flux $G_3$, defined by \eq{G_3 = F_3 - \tau H_3 \label{eq:g3def}} We also define \eq{\tilde{F}_5 = F_5 - \frac{1}{2}C_2\wedge H_3 + \frac{1}{2}B_2\wedge F_3.} We force the field strength $F_5$ to obey a self-duality constraint \eq{\tilde{F}_5 = \star \tilde{F}_5\label{eq:selfdual}} by hand. The dynamics of these fields, excluding the self-duality constraint, are summarized in the effective action \seq{S =\ &\frac{1}{2\k_{10}^2}\int d^{10}x\sqrt{-g}\left[R - \frac{\d_M\t\d^M\bar{\t}}{2(\im{\t})^2} - \frac{G_3\cdot\bar{G}_3}{12\im{\t}} - \frac{1}{480}\tilde{F}_5^2\right] +\ignore{\\ +\ &}\frac{1}{8i\k_{10}^2}\int \frac{C_4\wedge G_3\wedge\bar{G}_3}{\im{\t}} + S_{\text{loc}},\numberthis\label{eq:action}} where $S_{\text{loc}}$ contains the action of localized objects such as branes. This action makes the $\sltz$ symmetry \eq{\t\to\frac{a\t+b}{c\t+d}, \ \ \ G_3 \to\frac{G_3}{c\t+d}} manifest.

%There are equations of motion coming from variations of the supergravity action. One is Einstein's equation for the metric, given %by\footnote{Following \cite{kachru:gkp}, we give the trace-reversed Einstein equations.} \eq{R_{MN} = \k_{10}^2\left(T_{MN} - %\frac{1}%{8}g_{MN}T\right),\label{eq:EFE}} where the stress tensor $T^{MN}$ is the sum of both a bulk piece and a piece $T^{MN}_{\text{loc}}$ capturing the contributions from localized objects. Another is the Bianchi identity for $\tilde{F}_5$, given by 

The IIB supergravity equations follow by varying the action $S$ and supplementing by the self-duality condition.  
In addition, there is a Bianchi identity for $\tilde F_5$:
\eq{d\tilde{F}_5 = H_3\wedge F_3 + 2\k_{10}^2T_3\rho_3^{\text{loc}},\label{eq:bianchiDiff}} 
where $T_3$ is the tension of a D3 brane and $\rho_3^{\text{loc}}$ is the density of D3 charge provided by localized sources.

\ignore{We find the equations of motion for the ten-dimensional system by varying Eq. \ref{eq:action} with respect to the fields. Most important for our purposes with be the Einstein equations for the ten-dimensional metric, given by\footnote{Following \cite{kachru:gkp}, we give the trace-reversed Einstein equations.} \eq{R_{MN} = \k_{10}^2\left(T_{MN} - \frac{1}{8}g_{MN}T\right),\label{eq:EFE}} where the stress tensor $T^{MN}$ is the sum of both a bulk piece and a piece $T^{MN}_{\text{loc}}$ capturing the contributions from localized objects. }

To obtain a four-dimensional theory from the ten-dimensional one, we take the ten dimensional spacetime $M_{10}$ to be a topological product \eq{M_{10} = \bbR^{3,1}\times X,} where $X$ is a CY threefold, i.e. a compact, three complex dimensional K\"ahler manifold with vanishing first Chern class. We take the Hodge diamond of $X$ to be of the form given in Eq. \ref{eq:HodgeDiamond}. Integrating \eqref{eq:bianchiDiff} over $X$, we have \eq{\int_{X}H_3\wedge F_3 = -2\k_{10}^2T_3Q_3,\label{eq:bianchi}} where $Q_3$ is the total D3 brane charge, which can be sourced by e.g. O3 planes and D7 branes, as well as D3 branes.  It is shown in \cite{kachru:gkp} that a conformally Calabi-Yau
ansatz for the metric on $X$, together with suitable imaginary self-duality conditions on $G_3$, yields supersymmetric flux vacua.

\ignore{ The original motivation for constructing flux compactifications was phenomenological, but can roughly be summarized as desiring a ``warped" ten-dimensional metric of the form \cite{rs1,rs2,verlinde:warped} \eq{ds^2 = e^{2A(y)}\eta_{\mn}dx^\m dx^n + e^{-2A(y)}\tilde{g}_{mn}dy^mdy^n,\label{eq:warpedMetric}} where $\eta_\mn$ is the four-dimensional Minkowski metric, $x^\m$ are coordinates on $\bbR^{1,3}$, $\tilde{g}_{mn}$ is the (unknown) metric on the threefold $X$, and $y$ are coordinates on $X$. In addition, once we have compactified down to four dimensions, we can solve Eq. \ref{eq:selfdual} by writing \eq{\tilde{F}_5 = (1+\star)\ d\a\wedge dx^0\wedge dx^1\wedge dx^2\wedge dx^3,} where $\a=\a(y)$ varies over the compact dimensions. 

Inserting Eq. \ref{eq:warpedMetric} into Eq. \ref{eq:EFE} and taking $MN$ to be along the four noncompact dimensions, we find a differential equation for the warp factor $A(y)$: \eq{\tilde{\nabla}^2A = e^{-2A}\frac{G_{mnp}\bar{G}^{mnp}}{48\im\t} + \frac{1}{4}e^{-6A}\d_m\a\d^m\a + \frac{\k_{10}^2}{8}e^{-2A}\left[\left(T_{\text{loc}}\right)^m_m-\left(T_{\text{loc}}\right)^\m_\m\right].\label{eq:del2A}} In the strict supergravity theory, with no local sources present, the stress tensor contribution vanishes, and we arrive at a contradiction: integrating then left hand side over $X$ gives zero, but the right-hand side (without the stress term) integrates to a positive-definite quantity. This is the no-go theorem forbidding warped compactifications in supergravity \cite{deWit:1986mwo,Maldacena:2000mw}. 

On the other hand, in string theory we have negative-tension objects, such as O3 planes, which allow warped compactifications. We will return to negative-tension objects briefly, but first we should consider the remaining equation of motion, i.e. the Bianchi identity for $\tilde{F}_5$, given by \eq{d\tilde{F}_5 = H_3\wedge F_3 + 2\k_{10}^2T_3\rho_3^{\text{loc}},} where $T_3$ is the tension of a D3 brane and $\rho_3^{\text{loc}}$ is the density of D3 charge provided by localized sources. Integrating over $X$, we have \eq{\int_{X}H_3\wedge F_3 = -2\k_{10}^2T_3Q_3,\label{eq:bianchi}} where $Q_3$ is the total D3 brane charge, which can be sourced by e.g. O3 planes and D7 branes.

 Eqs. \ref{eq:del2A} and \ref{eq:bianchi} give the dynamics of the two main dynamical variables, the warp factor $A(y)$ and the potential $\alpha$, in terms of the stress-energy of the various localized sources. As mentioned above, to obtain a nonconstant warp factor, we need sources with negative tension.}

As discussed in \cite{kachru:gkp}, these supersymmetric flux compactifications require that objects carrying negative D3-charge (and also, effectively negative tension) should be present in the compact dimensions.  This is because one can prove that in non-trivial solutions,
one has
\begin{equation}
\int_X H_3 \wedge F_3 > 0~.
\end{equation}
The need for negative charge and tension is fine, as string theory contains suitable objects.
The sources which usually contribute the requisite negative D3 charge include D7-branes wrapping divisors in (an orientifold of) the compact manifold $X$,
or orientifold O3-planes at points in $X$.  The D7 solutions are, without loss of generality, related to F-theory compactified on a CY fourfold $\tilde{X}$ \cite{sen:fourfold}. Such a configuration gives an effective D3 brane charge \eq{Q_{3}^{\text{eff}} = -\frac{\chi\big(\tilde{X}\big)}{24},} where $\x(\tilde{X})$ is the Euler character of $\tilde{X}$. Thus, we can think of Eq. \ref{eq:bianchi} as a tadpole cancellation condition: \eq{\frac{1}{2\k_{10}^2T_3}\int_{X} H_3\wedge F_3 + Q_3^{D3} = \frac{\chi\big(\tilde{X}\big)}{24},\label{eq:tadpole}} where $Q_3^{D3}$ is the contribution from mobile D3 branes, if present. 
 
We are thus led to consider compactifications with flux: in the absence of mobile D3 branes, or even in the presence of insufficiently many mobile D3 branes to fully cancel the tadpole, both $H_3$ and $F_3$ must have nonvanishing flux over $X$ to satisfy Eq. \ref{eq:tadpole}. On the other hand, we cannot have too much flux: in the absence of anti-D3 branes (which cannot be present in the tree-level ``no-scale supergravity" solutions of \cite{kachru:gkp}, though they can play an interesting role in more general solutions), Eq. \ref{eq:tadpole} also bounds the magnitude of the fluxes.  Choosing an integral symplectic basis $\alpha^a,\beta_b$ for the middle cohomology of $X$, we can define integral flux vectors $f$ and $h$ by writing \subeqs{F_3 &= -(2\pi)^2\a'\left(f_a\a^a + f_{a+h_{2,1}+1}\b_a\right) \\ h_3 &= -(2\pi)^2\a'\left(h_a\a^a + h_{a+h_{2,1}+1}\b_a\right).} We will set $(2\pi)^2\a'=1$ for the remainder of the paper. In terms of the symplectic matrix $\Sigma = \smallmat{0&1\\-1&0},$ the bound on the fluxes can be rewritten as \eq{f\cdot\Sigma\cdot{h}\le\frac{\chi\big(\tilde{X}\big)}{24}.}

Once we have specified flux vectors, the dynamics of the moduli fields of interest here are determined by a Kahler potential $\calK$ and a superpotential $W$, first derived in \cite{gukov:superpotential}. These are given in terms of the complexified flux $G_3$ (which is itself determined by the choice of $f$ and $h$), the axiodilation $\tau$, and the holomorphic threeform $\Omega$ as\footnote{Here, we have neglected the Kahler potential for the Kahler moduli.} \subeqs{\calK &= -\ln\left[-i\left(\t-\bar{\t}\right)\right] - \ln\left(-i\int_{X}\Om\wedge\bar{\Om}\right) \label{eq:kahler}\\W &= \int_{X} G_3\wedge\Om\label{eq:W}.} We say a point $\phi$ in complex structure moduli space defines a flux vacuum for the fluxes $f$ and $h$ if the conditions \eq{D_\t W = D_aW = 0\label{eq:flatness}} are met, where the index $a$ runs over the $h^{2,1}$ complex structure moduli of $X$ and \eq{D_IW = \d_IW + W\d_I\calK.\label{eq:kahlerCovariant}} We have used a new index $I$ in Eq. \ref{eq:kahlerCovariant} to emphasize that the same form holds for $D_\t{W}$. One can expand these conditions to find a criterion on the complex structure. In particular, a point $X_\phi$ in moduli space admits a flux compactification iff its complex structure has aligned such that \eq{G_3 \in H^{2,1}(X_\phi) \oplus H^{0,3}(X_\phi).}

We say that a flux vacuum is supersymmetric if, in addition to meeting Eq. \ref{eq:flatness}, it also satisfies\footnote{This condition comes from satisfying the F-term conditions for Kahler moduli; see \cite{kachru:gkp}.} \eq{W = 0.} This implies that $G_3$ has no (0,3) part, so we simply have \eq{G_3 \in H^{2,1}(X_\phi).\label{eq:g3h21}} Eq. \ref{eq:g3h21} is a very stringent criterion on the complex structure of $X_\phi$, and is the starting point for our analysis. However, it will be convenient to repackage this criterion somewhat. 
%Although Eq. \ref{eq:g3h21} is conceptually somewhat similar to the attractor equation \cite{renata:attractor1,strominger:attractor2,renata:attractor3,moore:a&along}  \eq{\g \in H^{3,0}(X_\phi)\oplus H^{0,3}(X_\phi),\label{eq:attractorEq}} as emphasized in \cite{moore:arithmeticLectures} it is much more complicated, as the complexified flux $G_3$ is not an element of the ordinary integral cohomology of $X$. However, as a consequence of Eq. \ref{eq:g3h21}, the integral flux vectors $f$ and $h$ obey a condition much closer to Eq. \ref{eq:attractorEq}.

For a supersymmetric flux compactification, the constraint $D_\t{W}=0$ becomes \eq{0 = D_\t W = \d_\t W + W\d_\t\calK = -\int_{X_\phi}h\wedge\Om + \frac{-i}{\t-\bar{\t}}W = -\int_{X_\phi}h\wedge\Om \label{eq:DtauWsusy},} so in a $W=0$ compactification the flux vector $h$ has no (0,3) part. Taking the complex conjugate of Eq. \ref{eq:DtauWsusy}, we have \eq{\int_{X_\phi}h\wedge\bar{\Om}=0,} i.e. $h$ has no (3,0) part. Thus \begin{subequations} \eq{h \in H^{2,1}(X_\phi)\oplus H^{1,2}(X_\phi).} On the other hand, in light of Eq. \ref{eq:g3def}, if $f$ had any (3,0) or (0,3) part, then so would $G_3$, so we must also have \eq{f \in H^{2,1}(X_\phi)\oplus H^{1,2}(X_\phi).} \label{eq:fh21plus12}\end{subequations}

Thus, the complex structure alignment necessary to fix the Hodge type of $G_3$ also restricts the Hodge type of $f$ and $h$;  this consequence of Eq. \ref{eq:g3h21} has been observed before, in e.g. \cite{dewolfe:11222}. In particular, arbitrary complex superpositions of these two flux vectors will remain in $H^{2,1}(X_\phi)\oplus H^{1,2}(X_\phi)$. 
We note in passing that this is somewhat analogous to the attractor equation for a supersymmetric black hole in compactification on 
$X$, which requires choosing a charge vector $\gamma \in H^{3}(X,\mathbb{Z})$ with
\begin{equation}
\gamma \in H^{3,0}(X_\psi) \oplus H^{0,3}(X_\psi) \label{eq:attractorEq}
\end{equation}
at an attractor point $\psi$ in moduli space.\footnote{The rich study of the relationship between arithmetic and special points in Calabi-Yau moduli space was originated by Moore in studies of the attractor equation \cite{moore:a&along}. At this juncture, we can describe an interesting connection between supersymmetric flux vacua and very special attractor points called ``rank two attractors" -- namely, that they specify the same points in CY moduli space in the
specific case of CY manifolds $X$ with $h^{2,1}(X)=1$.  Indeed, it was essentially pointed out in \cite{candelas:attractors} that, if $h^{2,1}=1$, one can define a supersymmetric flux vacuum at any point in moduli space admitting a rank-two attractor; note that this is distinct from the construction of \cite{moore:arithmeticLectures}, which associates a rank-two attractor to a nonsupersymmetric flux vacuum. In a rank-two attractor $X_\psi$, there are two integral charge vectors $\g_1,\g_2\in H^{3}(X_\psi,\bbZ)$ satisfying Eq. \ref{eq:attractorEq}. Taking their complement in $H^{3}(X_\psi,\bbZ)$, we easily find a two-dimensional lattice of integral fluxes satisfying Eq. \ref{eq:fh21plus12}. As long as $h^{2,1}=1$, we can straightforwardly tune $\t$ such that Eq. \ref{eq:g3h21} is satisfied for any two linearly independent elements $f$ and $h$ of this lattice,  and thus any rank-two attractor with $h^{2,1}=1$ defines a supersymmetric flux vacuum. The converse is also true: a supersymmetric flux vacuum with $h^{2,1}=1$ defines a rank two attractor, whose charge lattice is the orthogonal complement of the span of the fluxes. Moore \cite{moore:a&along} has conjectured that all rank-two attractors are defined over some number field. Given the above discussion, this conjecture also implies that all supersymmetric flux compactifications with $h^{2,1}=1$ are also defined over a number field. This is in sharp contrast to the case of generic $h^{2,1}$, where flux vacua can come in continuous families. }

 There is much more to say about supersymmetric flux compactifications, including rich connections to arithmetic previously studied in \cite{kachru:0411,dewolfe:11222}, but we will stop here; the further results 
will not be relevant to our main point.

\ignore{\subsection{Periods, Number Fields, and Supersymmetric Flux Compactifications}
\label{sec:cyclotomic}}

\section{Modularity From Supersymmetric Flux Compactifications}
\label{sec:main}
We can now relate supersymmetric flux compactifications to the modularity of the underlying Calabi-Yau. The basic idea is fairly simple: if a point $\phi$ in the complex structure moduli space of a Calabi-Yau threefold $X$ admits a supersymmetric flux compactification, the conditions in Eq. \ref{eq:fh21plus12} give us a two-dimensional subspace of the integral cohomology of $X_\phi$ with a pure Hodge structure of type (2,1) + (1,2). If $X_\phi$ is an algebraic variety defined over $\mathbb{Q}$, then by the Hodge conjecture this split induces a split of the \'etale cohomology of $X_\phi$ over a number field $K$. If this field $K$ is $\bbQ$, then we have recovered an \'etale split of the sort discussed in Section \ref{sec:modThree}, and $X_\phi$ is associated to a weight-two eigenform. (A similar argument was used to study the modularity of rank-two attractors in a one-parameter family of Calabi-Yau threefolds in \cite{candelas:attractors}.)

We will now make this argument precise. Consider a Calabi-Yau threefold $X$ with complex structure parameter $\phi$, and let $X_\phi$ be a fiber algebraically defined over $\bbQ$ that admits a supersymmetric flux compactification. Then the complex structure of $X_\phi$ has aligned such that there exist two integral cohomolgy elements $f,h\in H^{2,1}(X_\phi)\oplus H^{1,2}(X_\phi)$ which span a two dimensional subspace of $H^3(X_\phi,\bbQ)$. More precisely, denote by $H_{\text{flux}}$ the $\bbQ$-span of $f$ and $h$: \eq{H_{\text{flux}} = \bbQ{f}+\bbQ{h}.} Then we have a split \eq{H^3(X_\phi,\bbQ) = H_{\text{flux}} \oplus H_{\text{remainder}}.\label{eq:fluxQsplit}} The Hodge structure on $H^3(X_\phi,\bbQ)$ defines a Hodge structure on $H_{\text{flux}}$ with Hodge type (2,1)+(1,2), i.e. with no (3,0) or (0,3) components.

We would like to relate this split of singular cohomology to the discussion of Section \ref{sec:modThree}. However, that discussion was phrased in terms of \'etale cohomology, and so we need way to translate between these two cohomology theories. This is provided by the Hodge conjecture, which is introduced in detail in Appendix \ref{sec:wenzhe}. We will now briefly explain its role in our problem here, and importantly why the situation is not as nice as might have been hoped.

Let us back up slightly, and consider a variety $X/K$ over a number field $K$. The \'etale cohomologies $\het^i(X,\bbQ_\ell)$ of $X$ are representations of the Galois group $\gal(\bar{\bbQ}/K)$. The polynomials defining $X/K$ also define varieties $X/K'$ over all (finite) field extensions $K'$ of $K$. The \'etale cohomologies of these varieties over bigger number fields furnish representations of the corresponding Galois groups $\gal(\bbQb/K')$. Now let us assume that the singular cohomology of our original variety $X/K$ obeys \eq{H^3(X/K,\bbQ) = H' \oplus H'',\label{eq:hodgeSplit1}} with no assumption on the dimension of $H'$ or $H''$. Then it is shown in Appendix \ref{sec:hodgeconjecturereview} that there exists some finite field extension $K'$ of $K$ such that \eq{\het^3(X/K',\bbQ_\ell) = M' \oplus M'',\label{eq:hodgeSplit2}} where $M'$ has the same dimension and Hodge type of $H'$ (and similarly for $M''$). 

Now let us return to our supersymmetric flux compactification $X_\phi$ defined over $\bbQ$ and obeying Eq. \ref{eq:fluxQsplit}. Applying the Hodge conjecture, there exists a number field $K'$, i.e. a finite field extension of $\bbQ$, such that the variety $X_\phi/K'$ obeys \eq{\het^3(X_\phi/K',\bbQ_\ell) = M_{\text{flux}} \oplus M_{\text{remainder}} \label{eq:etaleSplitK'},}  where  $M_{\text{flux}}$ is a two-dimensional representation of $\gal(\bbQb/K')$ of Hodge type $(2,1) + (1,2)$. We can now investigate the modularity of $M_\flux$.

In the best, and simplest, case, $K'$ is just $\bbQ$, without any extension\footnote{The assumption in Section 1.3 of \cite{candelas:attractors} that the cycle $S$ is defined over $\bbQ$ amounts to assuming that $K'=\bbQ$.}. Then $X_\phi/K'$ is simply  $X_\phi$, and Eq. \ref{eq:etaleSplitK'} simply becomes \eq{\het^3(X_\phi,\bbQ_\ell) = M_{\text{flux}}\oplus M_{\text{remainder}},\label{eq:fluxEtaleSplit}} where now $M_\flux$ is a two-dimensional representation of $\gal(\bbQb/\bbQ)$ with Hodge type $(2,1) + (1,2)$, so that we can apply the discussion of Section \ref{sec:modularity} directly. Thus the Tate twist $M_{\text{flux}} \otimes \mathbb{Q}_\ell(1)$ of $M_\flux$ has Hodge type $(1,0) + (0,1)$ and is associated to a weight-two Hecke eigenform $f_2(\t) = \sum_n c_nq^n$. In particular, that means that, for good primes $p$, the $p$-th Fourier coefficient $c_p$ of $f_2(\t)$ will be present in the $\zeta$-function of $X_\phi$, according to Eq. \ref{eq:wt2factor}. 

Unfortunately, this is not the only possibility; it can also happen that, for some $X_\phi$, $K'$ is not simply $\bbQ$, but instead some larger number field. In this case, matters are much less clear. Of course, in the general spirit of the Langlands program we can (and do!) conjecture that $M_\flux$, or an appropriate Tate twist thereof, is modular. However, exactly what modularity means in this context is somewhat unclear. The precise nature of the automorphic form associated to $M_\flux$ depends strongly on the nature of $K'$. For instance, if $K'$ is a real quadratic field then we expect $M_\flux$ to be associated to a Hilbert modular form \cite{consani2000geometry,schutt:hilbert,straten:hilbert}. On the other hand, for more general fields it is not known what sort of automorphic object we expect to find; see e.g. \cite{taylor:CMfields} for recent progress on the study of modularity for elliptic curves over more complicated number fields. Even if $K'$ is a more general field, however, it is still sometimes possible to find  ordinary modular forms in the $\zeta$-functions of $X_\phi$\footnote{This intuition is supported by the Chebotarev density theorem \cite{densityTheorem}, which states that, regardless of $K'$, $R(t)$ should factor at an infinite number of primes.}. This can be seen for example in the $\bbQ[\sqrt{17}]$ examples of \cite{candelas:attractors}.

So far we have restricted ourselves to supersymmetric flux vacua defined over $\bbQ$. However, this is not really necessary; the same argument goes through almost exactly if we start with a more general number field. A fiber $X_\phi/K$ defined over a number field $K$ is still described by a choice of $G_3$ satisfying Eq. \ref{eq:g3h21}, and thus still satisfies Eqs. \ref{eq:fh21plus12} and \ref{eq:fluxQsplit}. Then by Hodge there exists some field extension $K'$ of $K$ over which Eq. \ref{eq:etaleSplitK'} is satisfied, and $M_\flux$ is a two-dimensional representation of $\gal(\bbQb/K')$; we thus conjecture it to be modular. This is exactly the same as the $K = \bbQ, K' \neq \bbQ$ situation, with one important caveat: $K'$ is always at least as big as $K$, so if $K$ is bigger than $\bbQ$ then so is $K'$, and the results of Section \ref{sec:modularity} will never apply. We have thus restricted ourselves to $\bbQ$ only for simplicity.

Before we move on, we will make several comments. First, let us consider the special case $h^{2,1}=1$. A supersymmetric flux compactification over $\bbQ$ is weight-two modular, but as discussed above is also a rank-two attractor, and thus also weight-four modular. Thus, the weight-two modularity observed in the rank-two attractor examples of \cite{candelas:attractors} is related to the presence of a supersymmetric flux compactification.

Next, we note that the converse of our result is not necessarily true: not any threefold defined over $\bbQ$ that is weight-two modular is a supersymmetric flux vacuum. Any weight-two modular threefold obeys Eq. \ref{eq:fluxEtaleSplit}. Although Eq. \ref{eq:fluxQsplit} does not imply Eq. \ref{eq:fluxEtaleSplit}, any fiber $X_\phi$ satisfying Eq. \ref{eq:fluxEtaleSplit} also obeys Eq. \ref{eq:fluxQsplit}. Any such $X_\phi$ admits integral fluxes $f,h$ satisfying Eq. \ref{eq:fh21plus12}. However, this is not enough to imply Eq. \ref{eq:g3h21}, which we can only satisfy if the projections of $f$ and $h$ onto $H^{1,2}(X_\phi)$ are collinear. This will not be true in general, but is always true if $h^{2,1}=1$. Thus, weight-two modularity only necessarily implies the presence of a supersymmetric flux compactification if  $h^{2,1}=1$.

Finally, we note briefly that the same argument could have been obtained from Eq. \ref{eq:g3h21} directly, without recourse to Eq. \ref{eq:fh21plus12}. However, then we would be considering the span of $G_3$ and $\bar{G}_3$, which are not in integral cohomology but instead in the singular cohomology over the ring $\bbZ[\t]$. This complicates the argument, which should then be phrased in a more general context, but a similar conclusion can be reached.

\ignore{We now appeal to some well-known results from algebraic geometry; see e.g. Appendix \ref{sec:wenzhe} for a brief review. From the Hodge conjecture, there exists a number field $K$ over which the \'etale cohomology of $X_\phi$ splits in the same manner as the singular cohomology. If the number field $K$ turns out to be $\mathbb{Q}$, then we have a split of the \'etale cohomology of $X_\phi$: \eq{\het^3(X_\phi,\bbQ_\ell) = M_{\text{flux}}\oplus M_{\text{remainder}},\label{eq:fluxEtaleSplit}} where $M_{\text{flux}}$ again has Hodge type (2,1)+(1,2). \ignore{Now take a Tate twist by $\mathbb{Q}_\ell(1)$, so that $M_{\text{flux}} \otimes \mathbb{Q}_\ell(1)$ has Hodge type (1,0)+(0,1).} Thus, from the discussion in Section \ref{sec:modThree}, we expect that there exists a weight-two Hecke eigenform $f_2(\t) = \sum_n c_nq^n$ associated to the Tate twist $M_{\text{flux}} \otimes \mathbb{Q}_\ell(1)$, which has Hodge type $(1,0)+(0,1)$. Then the characteristic polynomial of the Frobenius for $M_{\text{flux}}$ at a good prime $p$ is of the form \eq{1 - c_p(pt) + p(pt)^2.\label{eq:wt2factortwist}} }

\ignore{We have thus argued that the Hodge conjecture implies that at least some supersymmetric flux vacua over $\bbQ$ are potentially modular, associated to which are weight-two Hecke eigenforms. While we hope that this level of detail is appropriate for a physics audience, some readers will undoubtedly prefer a more rigorous treatment; this is provided in Appendix \ref{sec:wenzhe}. The most important point that we have glossed over is the precise nature of the map between the singular cohomology of $X_\phi$ over $\bbQ$ and its \'etale cohomology; its definition and relationship to the Hodge conjecture are somewhat involved.
%and at the level of physics should probably be black-boxed. 

So far we have restricted ourselves to varieties defined over $\bbQ$. However, the Hodge conjecture implies that a similar argument will go through for threefolds defined over all number fields (see Section \ref{sec:hodgeconjecturereview}), and it is tempting to conjecture that all supersymmetric flux compactifications defined over a number field are modular for some appropriately general automorphic form.  For instance, we expect that a supersymmetric flux compactification defined over a real quadratic field should be associated to a Hilbert modular form \cite{consani2000geometry,schutt:hilbert,straten:hilbert}. However, much less is known about modularity over fields other than $\bbQ$.

 While the argument presented above is strong, it relies in a crucial way on the Hodge conjecture, and thus is well worth checking in examples. In the remainder of this paper, we will therefore examine the following question in a rich example:

\begin{quote}\begin{centering}\textbf{Let $X_{\phi}$ be a CY threefold defined over $\bbQ$ admitting a supersymmetric flux compactification. Does there exist a  weight-two Hecke eigenform $f_2(\t) = \sum_n c_nq^n$ associated to the middle cohomology of $X_{\phi}$, in the sense of Eq. \ref{eq:wt2factor}?}\end{centering}\end{quote} 

\ignore{To study this question, we will need to look at the zeta function of $X$. Suppose $p$ is a good prime of $X$, then from Section \ref{sec:WeilConj}, the zeta function of $X$ for $p$ is of the form \eq{\zeta(X,p,T) = \frac{R(T)}{D(T)},} where $R(T)$ and $D(T)$ are polynomials in $T$, with degrees $2+2h^{2,1}(X)$ and $2+2h^{1,1}(X)$, respectively. The split in the formula \ref{eq:fluxEtaleSplit} implies the factorization of $R(T)$ into the product of a quadratic polynomial and an order-$(b_3-2)$ polynomial, and the quadratic part encodes the $p$-th Fourier coefficient of the associated modular form. For instance, if associated to $M_{\text{flux}}$ is a weight-two Hecke eigenform $f_2(\t) = \sum c_nq^n$, then $R(T)$ has a quadratic factor of the form \eq{1 - c_p\,(pT) + p(pT)^2.\label{eq:wt2factor}} The difference between formula \ref{eq:wt2factortwist} and formula \ref{eq:wt2factor} is a Tate twist. Factorizations of this form will be used throughout the paper to diagnose weight-two modularity, as was done in \cite{candelas:attractors} to diagnose modularity. 

Suppose the polynonial $R(T)$ is given by \eq{R(T)=1-a_pT+\cdots,} then the integer $a_p$ is of the form \cite{meyer:book} \eq{a_p = p^3 + \left(p^2+p\right)k_p(X) - \#(X,p),\label{eq:ap3}} where $|k_p|<h^{1,1}(X)$. It is the three dimensional generalization of formula \ref{eq:apEll}. But it is still impractically difficult to find the quadratic factor of $R(T)$.}}

\section{An Example: The Octic in $\bbP(1,1,2,2,2)$}
\label{sec:octic}

We have argued that the Hodge conjecture implies that at least some supersymmetric flux compactifications defined over $\bbQ$, namely those with $K'=\bbQ$, are modular, and associated to a weight-two eigenform. Unfortunately, we do not know of a simple way to identify $K'$; it must be understood separately for each $X_\phi$, i.e. for each rational point in the moduli space of a threefold $X$ admitting a supersymmetric flux compactification. We will therefore spend the rest of this paper investigating the following question in a rich example:   

\begin{quote}\begin{centering}\textbf{Let $X_{\phi}$ be a CY threefold defined over $\bbQ$ admitting a supersymmetric flux compactification. Does there exist a  weight-two Hecke eigenform $f_2(\t) = \sum_n c_nq^n$ associated to the middle cohomology of $X_{\phi}$, in the sense of Eq. \ref{eq:wt2factor}?}\end{centering}\end{quote}

Our main example is the octic hypersurface in $\bbP(1,1,2,2,2)$, studied in e.g. \cite{candelas:2param1}; generic fibers in this family have Hodge numbers $h^{1,1}=2$, $h^{2,1} = 86$. This model is particularly suited for our purposes, because it is known to admit supersymmetric flux compactifications \cite{dewolfe:11222} and its $\zeta$-functions have been computed for small prime numbers \cite{kadir:octic}. We want to check whether, at rational points  admitting supersymmetric flux compactifications, the Fourier coefficients of weight-two eigenforms are present in quadratic factors of the numerators of the zeta functions.

Flux vacua in this model are easiest to study on the mirror locus, where the threefolds $X_{\psi,\phi}$ are constructed from the vanishing locus of the polynomial \eq{x_1^8+x_2^8+x_3^4+x_4^4+x_5^4 - 8\psi x_1x_2x_3x_4x_5 - 2\phi x_1^4x_2^4 = 0.\label{eq:octic}} It was proven in \cite{dewolfe:11222} that this model admits a continuum of flux vacua. In particular, supersymmetric flux vacua can be found on the entire $\psi=0$ locus. Thus, we expect the points of the form $\psi = 0, \phi\in\bbQ$ to be associated to weight-two Hecke eigenforms.

To diagnose modularity, we will look for factorizations of the $\zeta$-function. By the argument of \cite{candelas:finite2}, it will be sufficient for us to consider only the numerator $R_0(t)$ of the $\zeta$-function of the mirror family, which has order $2+2h^{1,1}=6$. See the Appendix \ref{sec:mirror} for more details about this argument. Notice that $R_0(t)$ is the characteristic polynomial of the Frobenius element acting on the third \'etale cohomology of the mirror of $X_{\psi,\phi}$. This factor was computed for all $\psi,\phi\in\bbQ$ and for all primes less than 19 in \cite{kadir:octic}. In each case, $R_0(t)$ is a sextic polynomial in $t$, and has one or more quadratic factors. If $X_\phi$ is modular, then one quadratic factor will, by Eq. \ref{eq:wt2factor}, be related to the $p$-th Fourier coefficient of the associated weight-two eigenform. Thus, for each $p$, we can read off one or more possible Fourier coefficients $c_p^\zeta$. Cross-referencing those possible values with a table of weight-two Hecke eigenforms, such as those found in \cite{interactiveTable,meyer:table2}, must yield a unique eigenform if $X_\phi$ is modular. We will check this for several points in moduli space, and find modularity in all examples.

The most obvious point to check is the Gepner point, $\psi=\phi=0$. This is an especially nice point arithmetically, as the axiodilaton $\phi$ lives in a cyclotomic field \cite{dewolfe:11222}. At this point, the only bad prime is $p=2$, as can easily be verified by differentiating Eq. \ref{eq:octic}. For all other primes less than 19, the mirror numerators $R_0(t)$ have been computed in \cite{kadir:octic}. These are listed in Table \ref{tab:phi0}. We are interested in finding a split of the form given in Eq. \ref{eq:wt2factor}. Comparing Eq. \ref{eq:wt2factor} with Table \ref{tab:phi0}, we need \eq{c_5 = 2, c_{13} = -6, c_{17} = 2,} with $c_p$ vanishing for all other primes less than 19.  Consulting online tables of wieght-two modular forms \cite{interactiveTable,meyer:table2} shows that there is a unique weight-two Hecke eigenform with these coefficients, which happens to be the unique wieght-two newform for $\Gamma_0(64)$, called 64.2.a.a in \cite{interactiveTable}. It is very interesting to notice that the level of this newform, i.e. 64, is a power of the bad prime $p=2$. Thus, we find strong numerical evidence that the Gepner point is associated to a weight-two Hecke eigenform, as expected\footnote{This eigenform has been associated to the octic hypersurface before. The octic is ruled by a genus-three surface \cite{kadir:octic,rolf:genus3}, and so in the spirit of \cite{verrill:ruled} is weight-two modular. It turns out that the modular form associated to this genus-three surface is again 64.2.a.a, and so its Fourier coefficients appear in a so-called ``exceptional factor" $R_{\text{exceptional}}(t)$ in the numerator of the $\zeta$-function \cite{kadir:octic}. However, this is quite different than its appearance here. In particular, $R_{\text{exceptional}}$ is independent of $\psi$ and $\phi$, whereas 64.2.a.a only appears in $R_0(t)$ at the Gepner point $\psi=\phi=0$.}.  

\begin{table}[h]
\begin{centering}
\begin{tabular}{|c|c|c|c|}\hline
$p$ & $R_{0}(t)$  & $c_p^\zeta$ & $c_p$\\\hline
3 & $\left(1 + 3^3 t^2\right)\left(1 - 18t^2 + 3^6t^4\right)$ & 0 & 0 \\\hline
5 & $\left(1 - 10t + 5^3t^2\right)\left(1 - 70t^2 + 5^6t^4\right)$ & 2 & 2 \\\hline
7 & $\left(1+7^3t^2\right)\left(1+686t^2+7^6t^4\right)$ & 0 &0 \\\hline
11 & $\left(1 + 11^3t^2\right)\left(1+1694t^2+11^6t^4\right)$ & 0 & 0 \\\hline
13 & $\left(1 + 78t + 13^3t^2\right)\left(1 -3094t^2 + 13^6t^4\right)$ &-6 & -6 \\\hline
17 & $\left(1 - 34t + 17^3t^2\right)\left(1 + 180t + 15878t^2 + 17^3180t^3 + 17^6t^4\right)$ & 2 & 2\\\hline
\end{tabular}
\caption{Contributions to the $\zeta$-function of the mirror octic at the point $\psi=\phi=0$. These are listed in Appendix B of \cite{kadir:octic}. From the form of $R_0(t)$ for each $p$, we can find the Fourier coefficient $c_p^\zeta$ that a weight-two eigenform must have to satisfy Eq. \ref{eq:wt2factor}. We list these alongside the Fourier coefficients of the weight-two eigenform 64.2.a.a \cite{interactiveTable}. These match for each $p$, and thus we conclude that the mirror octic at the point $\psi=\phi=0$ is modular.}
\label{tab:phi0}
\end{centering}
\end{table}

We have also found evidence for modularity for several other rational points along the $\psi=0$ locus. These points are summarized in Table \ref{tab:summary}. For each choice of $\phi$, we list the primes $p$ of bad reduction, and the appropriate weight-two Hecke eigenform, following the naming conventions of \cite{interactiveTable}; this labeling indicates the level $N$ of the eigenform, which is divisible only by the bad primes of $X_{\psi,\phi}$. Analogs of Table \ref{tab:phi0} are provided for each point in Appendix \ref{sec:data}, including both the $\zeta$-function numerators, computed in \cite{kadir:octic}, and the Fourier coefficients of the appropriate Hecke eigenforms. We see that each mirror $\zeta$ numerator has a quadratic factor encoding the appropriate Fourier coefficient. These are highly nontrivial checks, and we consider these results to be very strong evidence for the modularity of these rational supersymmetric flux vacua. 

\begin{table}[h]
\begin{centering}
\begin{tabular}{|c|c|c|}\hline
$\phi$ & Bad primes & Modular form \\\hline
$1/2$ & 2, 3 & 48.2.a.a \\\hline
$3/5$ & 2, 5 & 400.2.a.e \\\hline 
11/8 & 2, 3, 19 & 912.2.a.b \\\hline
$2$ & 2, 3 & 192.2.a.a \\\hline
3 & 2 & 32.2.a.a \\\hline
7 & 2, 3 & 24.2.a.a \\\hline
$9$ & 2, 5 & 40.2.a.a\\\hline
\end{tabular}
\caption{Summary of modularity results for rational values of $\phi$. For each $\phi$, we list the primes of bad reduction of $X_\phi$, and the weight-two Hecke eigenform associated to that point in moduli space, following the nomenclature of \cite{interactiveTable}. For each label, the integer gives the level of the modular form, which we note is divisible only by the bad primes. Data supporting these results is given in Appendix \ref{sec:data}.}
\label{tab:summary}
\end{centering}
\end{table}

We saw in Section \ref{sec:main} that it is also possible that rational supersymmetric flux vacua are not associated to an ordinary Hecke eigenform, but instead to some more general automorphic form. However, we have not been able to find any examples of such points. In addition to the points listed in Table \ref{tab:summary}, we have also studied many other points, and in all cases have found that the $\zeta$-functions are compatible with modularity\footnote{The points listed in Table \ref{tab:summary} are those for which we were able to pin down a unique newform. In the other examples we studied, the first few primes only provided enough data to restrict our consideration to a small number of newforms, instead of a unique one.}. This suggests that, at least in this example, the $K'=\bbQ$ case associated to an ordinary newform is relatively common, and we are cautiously optimistic that it is in some sense the ``generic" case.

\section{Conclusion}
\label{sec:conclusion}
In this paper, we have studied the modularity of supersymmetric flux compactifications over $\bbQ$.
%which bears close similarities with that established in \cite{candelas:attractors} between string theoretic special points in Calabi-Yau moduli space and modularity. 
However, we have only scractched the surface, and the story is far from complete. From the broader perspective of the Langlands program, we expect every \ignore{algebraically defined} Calabi-Yau threefold defined over a number field to be modular in a more general sense. While we are still far from understanding the modularity of more general Calabi-Yau threefolds, it is tempting to hope that string theory might be relevant.  In similar spirit, a complementary study of the modularity of rank-two attractor points in a one-parameter family Calabi-Yau threrefolds recently appeared in \cite{candelas:attractors}.

Even in the limited context of threefolds with \'etale splits, there is still work to be done. Here and in \cite{candelas:attractors}, the physical criteria were defined in terms of a splitting the singular cohomology of a special fiber $X_*$, and then this split was translated into a split of the \'etale cohomology by means of the Hodge conjecture. Doing so introduces the extra complication of the splitting field $K'$. While in the simple case $K' = \bbQ$ we understand the resulting modularity very well, for more general $K'$ we are comparatively ignorant. However, in examples we were unable to find any cases in which this added complication is relevant. Therefore, research into the nature of $K'$ is a natural line of inquiry. In particular, can  necessary and sufficient criteria on a fiber $X_\phi$ be found to ensure that $K' = \bbQ$? In the opposite direction, an example of a threefold defined over $\bbQ$ and admitting a supersymmetric flux compactification for which $K'\neq\bbQ$ would also be extremely interesting.

In this paper, we have only explicitly studied the octic hypersurface in $\bbP(1,1,2,2,2)$. There is no principled reason for this; we simply wanted an example where the $\zeta$-functions are already known. However, many other examples of supersymmetric flux vacua are known, including the sextic in $\bbP(1,1,1,1,2)$ \cite{kachru:0411} and several families with $h^{1,1}=2$ \cite{dewolfe:11222,kachru:fluxThreefolds} and $h^{1,1}=3$ \cite{dewolfe:11222}. If one could compute the $\zeta$-functions of these families, the analysis of Section \ref{sec:octic} could easily be repeated; we expect that similar results would be found. 

One can also imagine running this argument in reverse, and using known modularity results to find new supersymmetric flux compactifications. As discussed above, this is only guaranteed to work if $h^{2,1}=1$. Fortunately, several examples of threefolds with $h^{2,1}=1$ associated to weight-two modular forms are known. These include double octics \cite{meyer:book}, elliptically ruled surfaces \cite{meyer:book,verrill:ruled}, at least one Horrocks-Mumford quintic \cite{lee:quinticModular}, and at least one toric variety \cite{verrill:a4}. These are thus novel examples of supersymmetric flux compatifications. Further examples are provided by the rank-two attractors with $h^{2,1}=1$ identified in \cite{candelas:attractors}.

In the models with $h^{2,1}>1$ discussed in \cite{dewolfe:11222}, including the octic studied here, supersymmetric flux vacua can come in continuous families. Thus, in addition to the rational points whose modularity is the subject of this paper, these families also have fibers over number fields bigger than $\bbQ$. As mentioned above, most of the argument of Section \ref{sec:main} goes through for such fibers, and it seems likely that these fibers will also be modular, albeit for much more complicated automorphic forms. Thus, we suggest that these models might be a fertile ground for analyses along the lines of \cite{consani2000geometry,schutt:hilbert,straten:hilbert} to enlarge the number of known examples of modular threefolds for other number fields. \ignore{However, even for elliptic curves, not much is known about modularity over even quadratic imaginary fields; see e.g. \cite{taylor:CMfields} for recent progress. }

An obvious question from the string theory point of view is whether the automorphic forms associated to special points in Calabi-Yau moduli space admit a physical interpretation. \footnote{For elliptic curves, some discussion of this point recently appeared in \cite{kondo:worldsheetLFunctions,kondo:worldsheetLFunctions2}. For progress on threefolds see e.g. \cite{Kadir:2010dh}.}  Finding a direct physical interpretation of the Fourier coefficients of the associated modular forms (or the point counts) is a clear goal for future work.

\section*{Acknowledgements}
This research of SK was supported by the NSF under grant PHY-1720397 and by a Simons Investigator Award.
RN thanks C. Hewett, B. Rayhaun, and Z. Yao for helpful discussions, and is funded by NSF Fellowship DGE-1656518 and an EDGE grant from Stanford University. WY is supported by the Stanford Institute for Theoretical Physics.

\appendices

\section{$\zeta$-Functions and the Greene-Plesser Mirror}
\label{sec:mirror}
In this Appendix, we will review the relationship, described in \cite{candelas:finite2}, between the $\zeta$-function of a projective hypersurface $X$ and its Greene-Plesser mirror $Y$. Fix some weighted projective space $\bbP(k_1,k_2,k_3,k_4,k_5)$, where each of the weights $k_i$ divides the total weight $K=\sum_{i=1}^5k_i$. We can write down a projective CY hypersurface $X$ in this space as \eq{\sum_{i=1}^5 x_i^{K/k_i} - \phi_{\bf{A}}x^{\bf{A}} = 0\label{eq:generalHypersurface},} where the $\phi_{\bf{A}}$ are the $h^{2,1}(X)$ complex structure moduli of $X$ and the $x^{\bf{A}}$ are monomials with total weight $K$. Following \cite{candelas:finite1,candelas:finite2,candelas:dwork}, we have introduced a multi-index ${\bf{A}}=(a_1,a_2,a_3,a_4,a_5)$; the notation $x^{\bf{A}}$ should be taken to mean \eq{x^{\bf{A}} = \prod_{i=1}^5 x_i^{a_i}.} The $\zeta$-function of $X$ takes the form \eq{\zeta(X,t) = \frac{R(t)}{D(t)},} and we are interested primarily in the numerator $R(t)$, which is a polynomial of degree $b^3(X) = 2+2h^{2,1}(X).$

Now consider the mirror $Y$ of $X$. The Greene-Plesser construction of $Y$ realizes it as a quotient of $X$ \cite{GreenePlesser}. In general, only some subset of the $x^{\bf{A}}$'s will survive this quotient, i.e. we can write $Y$ as the vanishing of a polynomial \eq{\sum_{i=1}^5 y_i^{\tilde{K}/\tilde{k}_i} - \psi_{\tilde{\bf{A}}} y^{\tilde{\bf{A}}} = 0.} Write the $\zeta$-function of $Y$ as \eq{\zeta(Y,t) = \frac{\bR(t)}{\bD(t)}.} Now, the order of $\bR(Y,t)$ is $2+2h^{2,1}(Y)$, which by mirror symmetry is equal to $2+2h^{1,1}(X)$.  

As is familiar, it is convenient to separate out from Eq. \ref{eq:generalHypersurface} the subset of monomials that survive the quotient, i.e. we usually write $X$ as the vanishing of \eq{\sum_{i=1}^5 x_i^{K/k_i} -\psi_{\tilde{\bf{A}}}x^{\tilde{\bf{A}}} -  \phi_{\bf{A}}x^{\bf{A}} = 0\label{eq:generalHypersurfacemirror},} where now $\tilde{\bf{A}}$ runs over $h^{1,1}(X)$ monomials and $\bf{A}$ runs over $h^{2,1}(X)-h^{1,1}(X)$ monomials. Now comes the crucial point. It was observed in \cite{candelas:finite1,candelas:finite2,candelas:dwork} that \eq{R(t) = \bR(t) \cdot R_1(t).} Those references focused specifically on the quintic, but the same argument should apply much more generally; the same ideas were applied to the octic in $\bbP(1,1,2,2,2)$ in  \cite{kadir:octic}. In the literature on the arithmetic of flux compactifications, and in particular \cite{kachru:0411,dewolfe:11222,kachru:fluxThreefolds}, flux compactifications are usually implicitly studied on the mirror, so we are really interested in studying the modular properties of $\bR(t)$. In the language of $X$, this means that we are free to neglect the $R_1(t)$ term, and instead only focus on the (comparatively simple) factor $\bR(t)$. This is the point of view we will take throughout the main text.

\section{Weil Cohomology Theories, Motives, and the Hodge Conjecture}
\label{sec:wenzhe}
In this Appendix we will review some background material from algebraic geometry. We will first briefly introduce Galois representation and properties of $L$-functions. We will then review the construction of pure motives, before moving on to a motivic formulation of the Hodge conjecture that is critical in the study of the modularity of supersymmetric flux compactifications.

\subsection{The Absolute Galois Group} \label{AbsoluteGaloisgroup}

\noindent First we will briefly introduce the absolute Galois group $\text{Gal}(\overline{\mathbb{Q}}/\mathbb{Q})$; all of the information presented here can be found in e.g. \cite{Neukirch, SerreLF, Taylor}. Let $K$ be a number field, i.e. a finite extension of the field of rational numbers $\mathbb{Q}$. An element of $K$ is called an algebraic integer if it is a solution to an integral monic polynomial
\begin{equation}
x^n+a_{n-1}x^{n-1}+ \cdots +a_1 x+a_0=0, ~\text{with}~a_i \in \mathbb{Z}.
\end{equation}
The set of all algebraic integers forms a subring of $K$ that will be denoted by $\mathcal{O}_K$, and it includes $\mathbb{Z}$ as a subring. The ring $\mathcal{O}_K$ is a Dedekind domain, i.e. it is an integral domain in which every non-zero proper ideal has a unique factorization as a product of prime ideals. In particular the principal ideal $(p) \subset \mathcal{O}_K$, generated by a prime number $p \in \mathbb{Z}$, has a factorization into prime ideals of the form 
\begin{equation} \label{eq:factorizationofprimep}
(p)=\mathfrak{P}_1^{e_1}\cdots \mathfrak{P}_g^{e_g},~\text{with}~\mathfrak{P}_i \neq \mathfrak{P}_j~\text{when}~i \neq j,
\end{equation}
where the integers $e_i\ge1$ are called the ramification indices. Every nonzero prime ideal of $\mathcal{O}_K$ is maximal, and therefore $\mathcal{O}_K/\mathfrak{P}_i$ is a field that is a finite extension of the finite field $\mathbb{F}_p:=\mathbb{Z}/p\mathbb{Z}$ \cite{SerreLF}. The degree 
\begin{equation}
f(\mathfrak{P}_i/p):=[\mathcal{O}_K/\mathfrak{P}_i:\mathbb{F}_p]
\end{equation}
of this extension is called the residue class degree. We have a relation
\begin{equation}
\sum_{i=1}^g e_i\,f(\mathfrak{P}_i/p) =[K:\mathbb{Q}]~.
\end{equation}
We say $p$ is ramified if any of the $e_i$ is greater than one, and unramified otherwise. If a prime ideal $\mathfrak{P}$ of $\mathcal{O}_K$ occurs in the factorization \eqref{eq:factorizationofprimep}, we say $\mathfrak{P}$ divides $p$ and write $e(\mathfrak{P}/p)$ (resp. $f(\mathfrak{P}/p)$) for its ramification index (resp. residue class degree $[\mathcal{O}_K/\mathfrak{P}:\mathbb{F}_p]$). The prime number $p$ is said to split in $\mathcal{O}_K$ if $e_i=f(\mathfrak{P}_i/p)=1$ for every $i$ in the factorization \eqref{eq:factorizationofprimep}, while $p$ is said to be inert in $\mathcal{O}_K$ if $(p)$ is a prime ideal of $\mathcal{O}_K$.

Let $\overline{\mathbb{Q}}$ be the algebraic closure of $\mathbb{Q}$ and $\overline{\mathbb{Z}}$ be the subring of $\overline{\mathbb{Q}}$ that consists of all algebraic integers. The absolute Galois group $\text{Gal}(\overline{\mathbb{Q}}/\mathbb{Q})$ is the group of automorphisms of $\overline{\mathbb{Q}}$ that preserve $\bbQ$ pointwise. It is a profinite group that is given by the inverse limit
\begin{equation}
\text{Gal}(\overline{\mathbb{Q}}/\mathbb{Q})=\varprojlim_L \text{Gal}(L/\mathbb{Q}),
\end{equation}  
where $L$ runs over all the finite Galois extensions of $\mathbb{Q}$. This inverse limit endows $\text{Gal}(\overline{\mathbb{Q}}/\mathbb{Q})$ with a topology called the Krull topology, which is compact, Hausdorff and totally disconnected. For a prime number $p \in \mathbb{Z}$, let $\mathbb{Q}_p$ be the field of $p$-adic numbers, whose algebraic closure will be denoted by $\overline{\mathbb{Q}}_p$. From the constructions of completions with respect to nonarchimedean norms, an embedding $\overline{\mathbb{Q}} \hookrightarrow \overline{\mathbb{Q}}_p$ is uniquely determined by the choice of a prime ideal $\overline{p} \subset \overline{\mathbb{Z}}$ such that $\overline{p} \cap \mathbb{Z}=(p)$ \cite{SerreLF}. The element of $\text{Gal}(\overline{\mathbb{Q}}/\mathbb{Q})$ that admits an extension to a continuous automorphism of $\overline{\mathbb{Q}}_p$ forms a subgroup $D_p$ that is called the decomposition group and is isomorphic to $\text{Gal}(\overline{\mathbb{Q}}_p/\mathbb{Q}_p)$. The definition of $D_p$ depends on the chosen embedding $\overline{\mathbb{Q}} \hookrightarrow \overline{\mathbb{Q}}_p$, or equivalently the chosen prime ideal $\overline{p}$, and so $D_p$ is only well-defined modulo conjugations induced by elements of $\text{Gal}(\overline{\mathbb{Q}}/\mathbb{Q})$. The quotient of $\overline{\mathbb{Z}}_p$ (the ring of integers of $\overline{\mathbb{Q}}_p$) by its maximal ideal is isomorphic to $\overline{\mathbb{F}}_p$ (the algebraic closure of $\mathbb{F}_p$) \cite{SerreLF}. Moreover we have a short exact sequence \cite{SerreLF}
\begin{equation}
\begin{tikzcd}
0 \arrow[r] & I_p \arrow[r] & D_p \arrow[r] & \text{Gal}(\overline{\mathbb{F}}_p/\mathbb{F}_p) \arrow[r] & 0
\end{tikzcd},
\end{equation}
where $I_p$ is called the inertia group. The Galois group $\text{Gal}(\overline{\mathbb{F}}_p/\mathbb{F}_p)$ is a profinite group
\begin{equation}
\text{Gal}(\overline{\mathbb{F}}_p/\mathbb{F}_p)=\varprojlim_{n}\text{Gal}(\mathbb{F}_{p^n}/\mathbb{F}_p),
\end{equation}
which is topologically generated by the Frobenius map 
\begin{equation} \label{eq:arithmeticFrobenius}
\fr_p:\overline{\mathbb{F}}_p \rightarrow \overline{\mathbb{F}}_p,~x \mapsto x^p.
\end{equation}
The map $\fr_p$ is also called the arithmetic Frobenius element, while its inverse $\fr_p^{-1}$ is called the geometric Frobenius element. 
\begin{remark}
For a discussion of the absolute Galois group $\text{Gal}(\overline{K}/K)$ of a number field $K$ and its decomposition groups, inertia groups, Frobenius elements, etc, the reader is referred to \cite{SerreLF}. 
\end{remark}

\subsection{Classical Weil Cohomology Theories} \label{sec:puremotiverealizations}

\noindent  We will now briefly discuss three classical Weil cohomology theories for smooth algebraic varieties: the Betti, de Rham, and \'etale cohomologies. First we must introduce the notion of a pure Hodge structure \cite{PetersSteenbrink}.

A pure Hodge structure $H$ with weight $l\in \mathbb{Z}$ consists of the following data:
\begin{enumerate}
	\item  A finite dimensional rational vector space $H_{\mathbb{Q}}$;
	
	\item  A decreasing filtration $F^*H$ of the complex vector space $H_{\mathbb{C}}:=H_{\mathbb{Q}} \otimes_{\mathbb{Q}} \mathbb{C}$,
\end{enumerate}
such that $H_{\mathbb{C}}$ admits a decomposition
\begin{equation}
H_{\mathbb{C}}=\oplus_{p+q=l}\, H^{p,q},
\end{equation}
where $H^{p,q}:=F^{p} \cap \overline{F}^q$ \cite{PetersSteenbrink}. Here complex conjugation is defined with respect to the real structure $H_{\mathbb{R}}:=H_{\mathbb{Q}} \otimes_{\mathbb{Q}} \mathbb{R}$ of $H_{\mathbb{C}}$. The definition immediately implies that 
\begin{equation}
F^k=\oplus_{p \geq k}\, H^{p,\,l-p}.
\end{equation}
The category of all pure Hodge structures will be denoted by $\textbf{HS}_\mathbb{Q}$; it is a semi-simple abelian category. The simplest example of a pure Hodge structure is the Hodge-Tate object $\mathbb{Q}(n),n \in \mathbb{Z}$ with weight $-2\,n$.
\begin{definition}
The finite dimensional vector space of the Hodge-Tate object $\mathbb{Q}(n)$ is
\begin{equation}
(2 \pi i)^n \mathbb{Q} \subset \mathbb{C}
\end{equation}
and its Hodge decomposition is
\begin{equation}
\mathbb{Q}(n)^{-n,-n}=(2 \pi i)^n \mathbb{Q} \otimes_{\mathbb{Q}} \mathbb{C}.
\end{equation}
\end{definition}

Now we can move on to Weil cohomology theories. Let $X$ be a smooth projective variety defined over a number field $K$. 
\begin{enumerate}
	\item \textbf{Betti cohomology.} Suppose $\sigma: K \hookrightarrow \mathbb{C}$ is an embedding of $K$ into $\mathbb{C}$, so that by extension of the base field $X$ defines a variety over $\mathbb{C}$, namely $X \times_{\sigma} \mathbb{C}$. The $\mathbb{C}$-valued points (classical points) of $X \times_{\sigma} \mathbb{C}$, denoted by $(X \times_\sigma \mathbb{C})(\mathbb{C})$, form a smooth projective complex manifold. Intuitively, $X$ is defined by polynomials with coefficients in the field $K$. Using the embedding $\sigma$, these coefficients are sent to $\mathbb{C}$, and the equations that define $X$ become polynomials with coefficients in $\mathbb{C}$, and hence define a smooth complex manifold. The Betti cohomology associated to the embedding $\sigma$ is the singular cohomology group
	\begin{equation}
	H^i_{B,\sigma}(X)(n):=H^i\big((X \times_\sigma \mathbb{C})(\mathbb{C}),~\mathbb{Q}(n) \big)=H^i\big((X \times_\sigma \mathbb{C})(\mathbb{C}),~\mathbb{Q} \big) \otimes \mathbb{Q}(n),
	\end{equation}
	where $\mathbb{Q}(n)$ is the Hodge-Tate object defined above \cite{PetersSteenbrink}. From Hodge theory, there exists a pure Hodge structure on $H_{B,\sigma}(X)(n)$ with weight $w=i-2\,n$, i.e. it has a Hodge decomposition
	\begin{equation}
	H^i_{B,\sigma}(X)(n) \otimes_{\mathbb{Q}} \mathbb{C}= \oplus_{p+q=w}\, H^{p,q},~~h^{p,q}:=\text{dim}_{\mathbb{C}}\,H^{p,q}.
	\end{equation}
	Moreover if $\sigma$ is real, i.e. the image of $\sigma$ is contained in $\mathbb{R}$, then complex conjugation $c \in \text{Gal}(\mathbb{C}/\mathbb{R})$ acts on the points of $(X \times_\sigma \mathbb{C})(\mathbb{C})$, and this action induces an involution $c^*$ of $H_{B,\sigma}(X)(n)$. Define $\phi_{\sigma}$ to be the involution on $H_{B,\sigma}(X)(n)$ induced by the action of $c$ on both the points $(X \times_\sigma \mathbb{C})(\mathbb{C})$ and the coefficient ring $\mathbb{Q}(n)$. Then the conjugate-linear involution $\phi_{\sigma} \otimes c$ preserves the Hodge decomposition of $H_{B,\sigma}(X)(n) \otimes \mathbb{C}$, i.e. it sends $H^{p,q}$ to $H^{p,q}$.
	
	\item \textbf{de Rham cohomology.} Over a variety $X$, there exists a complex of sheaves of algebraic differential forms \cite{Hartshorne}
	\begin{equation} 
	\Omega_{X/K}^*: 0 \rightarrow \mathcal{O}_{X/K} \xrightarrow{d} \Omega_{X/K}^1 \xrightarrow{d} \cdots \xrightarrow{d}  \Omega_{X/K}^{\text{dim}(X)} \rightarrow 0.
	\end{equation}
	In order to define a `reasonable' cohomology theory, we have to choose an injective resolution $\Omega_{X/K}^* \rightarrow I^* $ in the  category of complexes of sheaves on $X$. The hypercohomology of $\Omega_{X/K}^*$ is defined to be \cite{Voisin}
	\begin{equation}
	\mathbb{H}^i(X_{\text{Zar}},\Omega^*_{X/K})=H^i(\Gamma(X,I^*)),
	\end{equation}
	where $X_{\text{Zar}}$ means the Zariski topology on $X$. The de Rham cohomology of $X$ shifted by $n$ is the hypercohomology of the shifted complex $\Omega^*_{X/K}[n]$
	\begin{equation}
	H^i_{\text{dR}}(X)(n):=\mathbb{H}^i(X_{\text{Zar}},\Omega^*_{X/K}[n]),~\text{with}~(\Omega^*_{X/K}[n])^l=\Omega^{l+n}_{X/K},
	\end{equation}
	which is a finite dimensional $K$-vector space \cite{Voisin}. The de Rham cohomology $H_{\text{dR}}(X)(n)$ has a decreasing filtration $F^p H_{\text{dR}}(X)(n)$ given by
	\begin{equation}
	F^p H^i_{\text{dR}}(X)(n)=\mathbb{H}^i(X_{\text{Zar}}, F^p\Omega_{X/K}^*[n]),
	\end{equation}
	where the complex $F^p\Omega_{X/K}^*[n]$ is
	\begin{equation}
	F^p\, \Omega_{X/K}^*[n]: 0 \rightarrow \cdots \rightarrow 0 \rightarrow \Omega_{X/K}^{p+n} \xrightarrow{d}  \Omega_{X/K}^{p+1+n} \xrightarrow{d} \cdots \xrightarrow{d}  \Omega_{X/K}^{\text{dim}\,X} \rightarrow 0.
	\end{equation}
	
	\item $\ell$\textbf{-adic cohomology.} Let $\ell$ be a prime number. The $\ell$-adic cohomology of $X$ is defined by the inverse limit
	\begin{equation}
	H^i_{\text{\'et}}(X_{\overline{K}},\mathbb{Q}_\ell):=\varprojlim_n H^i((X \times_K \overline{K})_{\text{\'et}}, \mathbb{Z}/\ell^n \mathbb{Z}) \otimes_{\mathbb{Z}_\ell} \mathbb{Q}_{\ell},
	\end{equation}
	where $(X \times_K \overline{K})_{\text{\'et}}$ means the \'etale topology on the $\overline{K}$-variety $X_{\overline{K}}:=X \times_K \overline{K}$ and $\mathbb{Z}/\ell^n \mathbb{Z}$ means the constant \'etale torsion sheaf on $(X \times_K \overline{K})_{\text{\'et}}$. The $\ell$-adic cyclotomic character $\mathbb{Q}_\ell(1)$ is defined by the inverse limit
	\begin{equation}
	\mathbb{Q}_\ell(1):= \varprojlim_n \mu_{\ell^n}(\overline{K}) \otimes_{\mathbb{Z}_{\ell}} \mathbb{Q}_\ell,
	\end{equation}
	where $\mu_{\ell^n}(\overline{K})$ consists of the $\ell^n$-th roots of unity and admits an action by $\mathbb{Z}/\ell \mathbb{Z}$. Let $\mathbb{Q}_\ell(n)$ be the tensor product $\mathbb{Q}_\ell(1)^{\otimes n}$, which is a continuous representation of the absolute Galois group $\text{Gal}(\overline{K}/K)$ \cite{Taylor}. The $\ell$-adic cohomology $H^i_{\text{\'et}}(X_{\overline{K}},\mathbb{Q}_\ell)(n)$ is defined by
	\begin{equation}
	H^i_{\text{\'et}}(X_{\overline{K}},\mathbb{Q}_\ell)(n):=H^i_{\text{\'et}}(X_{\overline{K}},\mathbb{Q}_\ell) \otimes_{\mathbb{Q}_\ell} \mathbb{Q}_{\ell}(n),
	\end{equation}  
	which is a continuous representation of $\text{Gal}(\overline{K}/K)$  \cite{MilneEC}. 
\end{enumerate}

\noindent There are standard comparison isomorphisms between the three cohomology theories \cite{Nekovar}:
\begin{enumerate}
	\item There is an isomorphism 
	\begin{equation} \label{eq:bettiderhamComparison}
	I_\sigma:H^i_{B,\sigma}(X)(n) \otimes_{\mathbb{Q}} \mathbb{C} \rightarrow H^i_{\text{dR}}(X)(n) \otimes_{\sigma} \mathbb{C},
	\end{equation}
	between the Betti and de Rham cohomologies that preserves the Hodge filtration, i.e. it sends $\oplus_{k \geq p} H^{k,w-k}$ to $F^pH_{\text{dR}}(X)(n) \otimes_{\sigma} \mathbb{C}$. The isomorphism $I_\sigma$ clearly depends on the choice of $\sigma$. If further $\sigma$ is a real embedding, $I_\sigma$ sends the involution $\phi_{\sigma} \otimes c$ on the left hand side to the involution $1 \otimes c$ on the right hand. 
	
	\item Let the embedding $\overline{\sigma}: \overline{K} \hookrightarrow \mathbb{C}$ be an extension of $\sigma$. There is an isomorphism \begin{equation} \label{eq:ladiccomparison}
	I_{\ell,\overline{\sigma}}:H^i_{B,\sigma}(X)(n)  \otimes_{\mathbb{Q}} \mathbb{Q}_\ell \rightarrow H^i_{\text{\'et}}(X_{\overline{K}},\mathbb{Q}_\ell)(n),
	\end{equation}
	between the Betti and $\ell$-adic cohomologies. This isomorphism clearly depends on the choice of $\overline{\sigma}$. If $\sigma$ is in particular a real embedding, complex conjugation $c$ defines an element $\overline{\sigma}^*(c)$ in $\text{Gal}(\overline{K}/K)$. Then $I_{\ell,\overline{\sigma}}$ sends the involution $\phi_\sigma \otimes 1$ on the left hand side to the involution $\overline{\sigma}^*(c)$ on the right hand side.
\end{enumerate}

\noindent 
The two comparison isomorphisms imply
\begin{equation}
\text{dim}_{\mathbb{Q}}H^i_{B,\sigma}(X)(n)=\text{dim}_KH^i_{\text{dR}}(X)(n)=\text{dim}_{\mathbb{Q}_\ell}H^i_{\text{\'et}}(X_{\overline{K}},\mathbb{Q}_\ell)(n).
\end{equation}

\begin{example}
	The three classical realizations of the second cohomology of $\mathbb{P}^1_{K}$ are given by:
	\begin{enumerate}
		\item $H^2_{B,\sigma}(\mathbb{P}^1_{K})=\mathbb{Q}(-1)_{\text{B}}=(2 \pi i)^{-1}\, \mathbb{Q}$, which has a Hodge decomposition of type $(1,1)$.
		
		\item $H^2_{\text{dR}}(\mathbb{P}^1_{K})=\mathbb{Q}(-1)_{\text{dR}}=K$, with Hodge filtration given by $F^2=0$ and $F^{1}=K$.
		
		\item $H^2_{\text{\'et}}(\mathbb{P}_{\overline{K}},\mathbb{Q}_\ell)=\mathbb{Q}_\ell(-1)$.
	\end{enumerate}
\end{example}

\subsection{\texorpdfstring{$L$}{L}-functions}

\noindent We will now define the $L$-function of a smooth projective variety $X$ over a number field $K$. For convenience, we will denote the $\ell$-adic cohomology $H^i_{\text{\'et}}(X_{\overline{K}},\mathbb{Q}_\ell)$ by $\mathbf{M}_\ell$, which is a continuous representation of the absolute Galois group $\text{Gal}(\overline{K}/K)$. Recall that a non-archimedean prime $v$ of $\mathcal{O}_K$ is given by a nonzero prime ideal of $\mathcal{O}_K$ \cite{SerreLF, Neukirch}. Suppose $I_v$ is the inertia group of $v$ in $\text{Gal}(\overline{K}/K)$. We say $\mathbf{M}_\ell$ is unramified at $v$ if the action of $I_v$ on $\mathbf{M}_\ell$ is trivial, in which case the geometric Frobenius element has a well defined action on $\mathbf{M}_\ell$ that will be denoted by $\Phi_v$ \cite{SerreLF, Taylor}. Since $X$ is a smooth projective variety, $\mathbf{M}_{\ell}$ is pure of weight $i$. Here `pure' means that there exists a set $S$ consisting of finitely many primes such that for a nonarchimedean prime $v \notin S$ which does not divide $\ell$, the representation $\mathbf{M}_\ell$ is unramified at $v$ and all the eigenvalues of $\Phi_v$ are algebraic numbers with absolute values $\text{Nm}(v)^{w/2}$ \cite{DeligneWeil}, where $\text{Nm}$ is the norm map defined on the fractional ideals of $\mathcal{O}_{K}$ \cite{Neukirch, SerreLF}. Generally, for a nonarchimedean prime $v$ of $\mathcal{O}_K$ such that $\ell \nmid \text{Nm}(v)$, let $\mathbf{M}_\ell^{I_v}$ be the subspace of $\mathbf{M}_\ell$ that is invariant under the action of $I_v$. Then the geometric Frobenius element has a well-defined action on $\mathbf{M}_\ell^{I_v}$ that will also be denoted by $\Phi_v$. The characteristic polynomial of $\mathbf{M}$ at $v$ is defined by
\begin{equation}
P_v(\mathbf{M}_\ell,T)=\text{det}\big( 1-T\,\Phi_v \vert \mathbf{M}_\ell^{I_v} \big),~ \ell \nmid \text{Nm}(v).
\end{equation}
From Deligne's proof of the Weil conjectures \cite{DeligneWeil}, if $X$ has good reduction at the non-archimedean prime $v$, then:
\begin{enumerate}
	\item $P_v(\mathbf{M}_\ell,T)$ is an integral polynomial of $\mathbb{Z}[T]$ and it is independent of the choice of $\ell$.
	
	\item $P_v(\mathbf{M}_\ell,T)$ has a factorization of the form
	\begin{equation}
	P_v(\mathbf{M}_\ell,T)= \prod_{j=1}^{\text{dim}(\mathbf{M}_\ell)}(1-\alpha_j \, T),
	\end{equation}
	where $\alpha_j$ is an algebraic integer with $|\alpha_j|=\text{Nm}(v)^{w/2}$ for every $j$.
\end{enumerate}
The variety $X$ has bad reduction at only finitely many primes, and Serre has a conjecture about the behavior of $P_v(\mathbf{M}_\ell,T)$ at these bad primes \cite{SerreLF}:
\begin{conjecture} \label{eq:Serresconjecture}
	For an arbitrary non-archimedean prime $v$, $P_v(\mathbf{M}_\ell,T)$ lies in $\mathbb{Z}[T]$, and it does not depend on the choice of $\ell$. The integral polynomial $P_v(\mathbf{M}_\ell,T)$ has a factorization
	\begin{equation}
	P_v(\mathbf{M}_\ell,T)= \prod_{j=1}^{\text{dim}( \mathbf{M}_\ell^{I_v})}(1-\alpha_j \, T),
	\end{equation}
	where for every $j$, $\alpha_j$ is an algebraic integer with absolute value
	\begin{equation}
	|\alpha_j|=\text{Nm}(v)^{w_j/2},~0 \leq w_j \leq w.
	\end{equation}
	
\end{conjecture}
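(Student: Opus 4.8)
The statement is essentially Serre's conjecture on the local Euler factors of $L$-functions, and in full generality it is open; what follows is the strategy that establishes it in the cases where it is known, together with an indication of where the difficulty lies. The good-reduction case is already covered by Deligne's theorem \cite{DeligneWeil} recalled just above, so the content is entirely at the bad primes $v$. The first step is to pass to semistable reduction by de Jong's theorem on alterations: after replacing $K$ by a finite extension there is a proper, generically finite map $X' \to X$ with $X'$ smooth projective and admitting a strictly semistable model $\mathcal{X}'$ over the ring of integers of the completion $K'_v$. Since pullback followed by pushforward on \'etale cohomology is multiplication by the degree, $\mathbf{M}_\ell = H^i_{\text{\'et}}(X_{\overline{K}},\mathbb{Q}_\ell)$ is a Galois-equivariant direct summand of $H^i_{\text{\'et}}(X'_{\overline{K}},\mathbb{Q}_\ell)$; hence it suffices to prove the analogous assertions for $X'$, and --- after tracking how the finite extension $K'/K$ alters the inertia subgroup and the geometric Frobenius --- for a strictly semistable model.

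For such a model one has the weight (Rapoport--Zink) spectral sequence, whose $E_1$-terms are built, after Tate twists, from the \'etale cohomology of the multiple intersections of the components of the special fiber. Each such intersection is smooth and projective over the finite residue field, so Deligne's Weil~II theorem shows that the $E_1$-terms, and therefore the $E_2$-terms, are pure of the expected weights and carry integral, $\ell$-independent Frobenius characteristic polynomials. The abutment $\mathbf{M}_\ell$ inherits a weight filtration $W_\bullet$ with $\mathrm{gr}^W_m\,\mathbf{M}_\ell$ pure of weight $m$ and $m$ ranging a priori over $\{0,1,\dots,2i\}$; the monodromy operator $N$, the logarithm of the unipotent part of the $I_v$-action on $\mathbf{M}_\ell$ over $K'_v$, lowers this filtration by $2$, and $\mathbf{M}_\ell^{I_v}$ agrees --- up to the finite part of the inertia quotient --- with $\ker N$.

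The crux, and the main obstacle, is the weight-monodromy conjecture for $\mathcal{X}'$: that for every $m \ge 0$ the iterated monodromy map
\begin{equation}
N^m : \mathrm{gr}^W_{i+m}\,\mathbf{M}_\ell \;\xrightarrow{\ \sim\ }\; \mathrm{gr}^W_{i-m}\,\mathbf{M}_\ell(-m)
\end{equation}
is an isomorphism. Granting this, the primitive parts sitting in weights $> i$ are injected by powers of $N$ into lower weights, so $\ker N$ --- and hence $\mathbf{M}_\ell^{I_v}$ --- has its weights confined to $\{0,1,\dots,i\} = \{0,\dots,w\}$, which is precisely the bound $|\alpha_j| = \text{Nm}(v)^{w_j/2}$ with $0 \le w_j \le w$. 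The same input forces the weight spectral sequence to degenerate at $E_2$, and combined with the integrality and $\ell$-independence of the $E_2$-terms this yields that $P_v(\mathbf{M}_\ell,T)$ lies in $\mathbb{Z}[T]$ and does not depend on $\ell$. Weight-monodromy is known in equal characteristic (Deligne), for abelian varieties and curves (Grothendieck, via the semistable reduction theorem), for surfaces (Rapoport--Zink), and for smooth complete intersections in projective space and toric varieties over $p$-adic fields (Scholze, via perfectoid methods), but it is open in general. Thus the Conjecture is, at present, exactly as hard as the weight-monodromy conjecture, and a complete proof would require new ideas beyond the alterations-plus-Weil~II package sketched here.
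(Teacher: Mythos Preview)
The paper does not attempt a proof of this statement: it is explicitly labeled a \emph{Conjecture} (attributed to Serre, with a citation to \cite{SerreLF}), stated without argument, and then simply \emph{assumed} in order to define the local $L$-factors and the global $L$-function in the equations that follow. So there is no ``paper's own proof'' against which to compare your proposal.

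That said, your write-up is a reasonable and largely accurate summary of the known strategy and its limits: de Jong alterations to reduce to strictly semistable models, the Rapoport--Zink weight spectral sequence with $E_1$-terms coming from smooth projective strata of the special fiber, Deligne's Weil~II for purity on those strata, and then the weight-monodromy conjecture to control the weights appearing in $\ker N$ and hence in $\mathbf{M}_\ell^{I_v}$. Your list of known cases (equal characteristic, abelian varieties and curves, surfaces, complete intersections via Scholze) is correct. One point deserves a caveat: the $\ell$-independence and integrality claim for $P_v(\mathbf{M}_\ell,T)$ does not follow quite as automatically from degeneration plus $\ell$-independence of the $E_2$-terms as you suggest. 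One needs that the Frobenius characteristic polynomial on each graded piece $\mathrm{gr}^W_k(\ker N)$ --- not just on the full $E_2$-terms --- is $\ell$-independent, and extracting this from the spectral sequence requires additional care (or independent input such as comparison with log-crystalline cohomology). This is a genuine technical gap in the sketch, though it does not affect your overall conclusion that the Conjecture is, in the general case, open and essentially tied to weight-monodromy.
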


Assuming this conjecture, the local $L$-factor of $\mathbf{M}_\ell$ at $v$ is defined by
\begin{equation} \label{eq:defnlocalLfactor}
L_v(\mathbf{M}_\ell,s):=P^{-1}_v(\mathbf{M}_\ell,\text{Nm}(v)^{-s}),
\end{equation}
and the $L$-function of $\mathbf{M}_\ell$ is defined by the infinite product
\begin{equation} \label{eq:defnLfunction}
L(\mathbf{M}_\ell,s):= \prod_v L_v(\mathbf{M}_\ell,s),
\end{equation}
where the product is over all non-archimedean primes of $\mathcal{O}_K$. The local $L$-factor $L_v(\mathbf{M}_\ell,s)$ satisfies \cite{Nekovar}
\begin{equation} \label{eq:transportLfunction}
L_v(\mathbf{M}_\ell(m),s)=L_v(\mathbf{M}_\ell,m+s)\text{ and }L_v(\mathbf{M}_{1,\ell} \oplus \mathbf{M}_{2,\ell},s)=L_v(\mathbf{M}_{1,\ell},s)L_v(\mathbf{M}_{2,\ell},s),
\end{equation} 
and the $L$-function of $\mathbf{M}_\ell$ satisfies similar indentities. Deligne's theorem and Conjecture \eqref{eq:Serresconjecture} imply that the infinite product in the definition of $L(\mathbf{M}_\ell,s)$ \eqref{eq:defnLfunction} converges absolutely when $\text{Re}(s) > w/2+1$, and hence $L(\mathbf{M}_\ell,s)$ is a nowhere vanishing holomorphic function in this region.

\subsection{A Universal Cohomology Theory}

\noindent Suppose $X$ is a smooth algebraic variety defined over a field $k$. For simplicity, we will focus on the case where the characteristic of $k$ is zero. As discussed in Section \ref{sec:puremotiverealizations}, there are several cohomology theories of $X$, e.g. the Betti, de Rahm, and \'etale cohomologies of $X$, that are very ``similar" to each other under the comparison isomorphisms. The idea of pure motives, originally from Grothendieck, is to formalize the similarity between different Weil cohomology theories. More precisely, the goal is to construct an abelian category of pure motives whose incarnations yield the classical Weil cohomologies. We now explain the intuition underlying the construction of the category of pure motives. 

One philosophy in the study of category theory is that we should focus on the morphisms between objects. Let us formally denote the cohomology of $X$ by $h(X)$. Unlike the classical Weil cohomology theories, here $h(X)$ is an abstract object without inner structure, like an atom in chemistry. To construct the category of pure motives, we will need to construct the morphisms between two objects $h(X)$ and $h(Y)$, where $h(Y)$ is the cohomology of another smooth variety $Y$. If $f:Y \rightarrow X$ is a map from $Y$ to $X$, we expect there is a morphism $f^*: h(X) \rightarrow h(Y)$. Importantly, the graph of such an $f$ defines an algebraic cycle in $X \times Y$ with codimension equal to $\text{dim}\,X$. Therefore the morphisms between $h(X)$ and $h(Y)$ will be constructed from algebraic cycles of $X \times Y$ with codimension equal to $\text{dim}\,X$, which can be considered as multi-valued maps from $Y$ to $X$.

To recover something reminiscent of the classical Weil cohomology theories, we will want to break the cohomology of $X$ into finer sub-objects, i.e.
\begin{equation}
h(X)=\oplus_i h^i(X).
\end{equation} 
The idea is that the object $h(X)$ is determined by the identity morphism $\text{Id}:X \rightarrow X$, which corresponds to the diagonal $\Delta_X$ of $X \times X$. Breaking up the object $h(X)$ is equivalent to decomposing the diagonal $\Delta_X$ into the sum of suitable algebraic cycles of $X \times X$. To make this idea more precise, we need to briefly review the theory of algebraic cycles and adequate equivalence relations \cite{FultonI}. 

\subsection{Algebraic Cycles and Adequate Equivalence Relations}

\noindent Let $\textbf{SmProj}/k$ be the category of non-singular projective varieties over a field $k$. A prime cycle $Z$ of a non-singular projective variety $X$ is an irreducible algebraic subvariety, and its codimension is defined as $\text{dim}\,X-\text{dim}\,Z$. On the other hand, an irreducible closed subset of $X$ has a natural algebraic variety structure induced by that of $X$ \cite{Hartshorne}. The set of prime cycles of dimension $r$ (resp. codimension $r$) generates a free abelian group that will be denoted by $C_r(X)$ (resp. $C^r(X)$), and elements of $C_r(X)$ (resp. $C^r(X)$) will be called the algebraic cycles of dimension $r$ (resp. codimension $r$). Two prime cycles $Z_1$ and $Z_2$ are said to intersect with each other properly if
\begin{equation}
\text{codim}(Z_1 \cap Z_2) =\text{codim}(Z_1)+\text{codim}(Z_2),
\end{equation} 
where $Z_1 \cap Z_2$ means the set-theoretic intersection between $Z_1$ and $Z_2$. If two prime cycles $Z_1$ and $Z_2$ intersect with each other properly, the intersection product $Z_1 \cdot Z_2$ is defined as
\begin{equation}
Z_1 \cdot Z_2= \sum_T m(T;\,Z_1 \cdot Z_2)\, T,
\end{equation}
where the sum is over all irreducible components of $Z_1 \cap Z_2$ and $m(T;\,Z_1 \cdot Z_2)$ is Serre's intersection multiplicity formula \cite{FultonI}. Extending the definition by linearity, we have that the intersection product is defined for algebraic cycles $Z=\sum_j\,m_j\,Z_j$ and $W= \sum_l\, n_l\,W_l$ when $Z_j$ and $W_l$ intersect properly for all $j$ and $l$. Therefore there is a partially defined intersection product on algebraic cycles
\begin{equation}
\begin{tikzcd}
C^r(X) \times C^s(X) \arrow[r,dotted] &C^{r+s}(X).
\end{tikzcd}
\end{equation}

If $f:X \rightarrow Y$ is a morphism between two non-singular projective varieties $X$ and $Y$, the pushforward homomorphism $f_*$ on algebraic cycles is defined by
\begin{equation}
f_*(Z):=
\begin{cases}
0 &\text{ if}~ \text{dim}\, f(Z) < \text{dim}\, Z, \\
[k(Z):k(f(Z))] \cdot f(Z)        &~ \text{if}~ \text{dim}\, f(Z) = \text{dim}\, Z,
\end{cases}
\end{equation}
where $Z$ is a prime cycle and $k(Z)$ (resp. $k(f(Z))$) is the function field of $Z$ (resp. $f(Z)$) \cite{Hartshorne}. Here $ [k(Z):k(f(Z))] $ is the degree of the field extension. Now we want to define the pullback homomorphism $f^*$. Given a prime cycle $W$ of $Y$, the first attempt is to naively try 
\begin{equation} \label{eq:cyclepullback}
f^*(W):= \sum_{T \subset f^{-1}(W)} \ell_{\mathcal{O}_{X,T}}(\mathcal{O}_{f^{-1}(W),T}) \cdot T,
\end{equation}
where the sum is over the irreducible components of $f^{-1}(Z)$ and $\ell_{\mathcal{O}_{X,T}}(\mathcal{O}_{f^{-1}(Z),T})$ is the length of $\mathcal{O}_{f^{-1}(Z),T}$ in $\mathcal{O}_{X,T}$ \cite{FultonI}. However this definition is only partially defined and in general $f^*(W)$ does not make sense. The solution to the above problems is to find an equivalence relation $\sim$ on the algebraic cycles such that the quotient group $C^*(X)/\sim$ is well behaved.
\begin{definition}
An equivalence relation $\sim$ on the algebraic cycles is called an adequate equivalence relation if given two arbitrary cycles $Z_1$ and $Z_2$, there exists a cycle $Z'_1$ in the equivalence class of $Z_1$ such that $Z'_1$ intersects with $Z_2$ properly, while the equivalence class of the intersection $Z'_1 \cdot Z_2$ is independent of the choice of $Z'_1$.
\end{definition}

Hence for an adequate equivalence relation $\sim$, there is a well defined intersection product on the quotient group $C_{\sim}^*(X):=C^*(X)/\sim$
\begin{equation}
C^r(X)_{\sim} \times C^s(X)_{\sim}  \rightarrow  C^{r+s}(X)_{\sim}.
\end{equation}
After quotienting, the pushforward and pullback homomorphisms are also well defined \cite{FultonI}:
\begin{equation}
f_*:C_{r,\sim}(X) \rightarrow C_{r,\sim}(Y), ~~f^*:C^r_{\sim}(Y) \rightarrow C^r_{\sim}(X).
\end{equation}
The set of adequate equivalence relations is ordered in the way such that $\sim_1$ is said to be finer than $\sim_2$ if for every cycle $Z$, $Z \sim_1 0$ implies $Z \sim_2 0$. The most important adequate equivalence relations are rational and numerical equivalence, which are the finest and coarsest adequate equivalence relations, respectively \cite{Andre, Scholl, FultonI}. When we are explicitly taking $\sim$ to be one of these relations, we will write e.g. $C^r_{\text{num}}$. In all three classical Weil cohomology theories, there exists a cycle class map $\text{cl}$
\begin{equation}
\text{cl}:C^*_{\text{rat}}(X)_{\mathbb{Q}} \rightarrow H^*(X),
\end{equation}
which doubles the degree and sends the intersection product of cycles to the cup product of cohomology classes.

\subsection{Pure Motives}\label{sec:motives}

\noindent Although the three examples of classical Weil cohomology theories we discussed above behave as if they all arise from an algebraically defined cohomology theory over $\mathbb{Q}$, it is known that this is not correct \cite{Andre, Scholl}. Grothendieck's idea to explain this phenomenon is that there exists one cohomology theory that is universal in the sense that all Weil cohomology theories are realizations of it. More precisely, Grothendieck conjectured that there exists a rigid tensor abelian category $\textbf{M}_{\text{hom}}(k,\mathbb{Q})$ over $\mathbb{Q}$ and a functor $h$
\begin{equation}
h: \left( \textbf{SmProj}/k\right)^{\text{op}} \rightarrow \textbf{M}_{\text{hom}}(k,\mathbb{Q})
\end{equation} 
such that for every Weil cohomology theory $H^*$, there exists a functor $H^*_m$ that factors through $h$
\begin{equation}
\begin{tikzcd}
\left( \textbf{SmProj}/k \right)^{\text{op}} \arrow[rd,"H^*"'] \arrow[r, "h"] & \textbf{M}_{\text{hom}}(k,\mathbb{Q}) \arrow[d,dotted,"H^*_m"] \\
& \text{Gr}^{\geq 0}\, \text{Vec}_K 
\end{tikzcd}.
\end{equation}
Here `op' means the opposite category and $\text{Gr}^{\geq 0}\, \text{Vec}_K$ is the abelian category of graded vector spaces over the field $K$. Now we will introduce the construction of the category of motives $\textbf{M}_{\sim}(k,\mathbb{Q})$, where $\sim$ is rational or numerical equivalence \cite{Andre, Scholl}. Given two non-singular projective varieties $X$ and $Y$, the group of correspondences from $X$ to $Y$ with degree $r$ is defined as
\begin{equation}
\text{Corr}^r(X,Y):=C^{\text{dim}\,X+r}(X \times Y).
\end{equation}
The composition of correspondences 
\begin{equation}
\text{Corr}^r(X,Y) \times \text{Corr}^s(Y,Z) \rightarrow \text{Corr}^{r+s}(X,Z)
\end{equation}
is defined by  
\begin{equation}
g \times h \rightarrow h \circ g:=(p_{13})_*\big((p_{12})^*g\cdot (p_{23})^*h\big),
\end{equation}
where $p_{12}$ is the natural projection morphism from $X \times Y \times Z$ to $X \times Y$, etc \cite{Scholl}. For a morphism $f:Y \rightarrow X$, its graph $\Gamma_f$ in $X \times Y$ is an algebraic variety that is isomorphic to $Y$, so that $\Gamma_f$ is an element of $\text{Corr}^0(X,Y)$ \cite{Hartshorne}. A correspondence of $\text{Corr}^0(X,Y)$ can be seen as a multi-valued morphism from $Y$ to $X$. A correspondence $\gamma$ defines a homomorphism from $H^*(X)$ to $H^*(Y)$ by
\begin{equation} \label{eq:cyclemap}
\gamma_*: x \mapsto p_{2,*}\,(\,p_1^*\, x \, \cup \, \text{cl}(\gamma)),
\end{equation}
where $p_1$ (resp. $p_2$) is the projection morphism from $X \times Y$ to $X$ (resp. $Y$). The homomorphism $(\Gamma_f)_*$ induced by $\Gamma_f$ is just the pullback homomorphism $f^*$.

We can now explicitly describe the construction of the category $\textbf{M}_{\sim}(k,\mathbb{Q})$, which proceeds in\ three steps \cite{Andre, Scholl}:

\begin{enumerate}
	\item First we construct a category whose objects are formal symbols
	\begin{equation}
	\{h(X):X \in \textbf{SmProj}/k \}.
	\end{equation}
	The morphisms between two objects are given by
	\begin{equation}
	\text{Hom}(h(X),h(Y)):=\text{Corr}^0_{\sim}(X,Y)_{\mathbb{Q}},
	\end{equation}  
	where we have defined 
	\begin{equation}
	\text{Corr}^r_{\sim}(X,Y)=\text{Corr}^0(X,Y)/\sim\text{ and }\text{Corr}^0_{\sim}(X,Y)_{\mathbb{Q}}=\text{Corr}^0_{\sim}(X,Y) \otimes_{\mathbb{Z}} \mathbb{Q}.
	\end{equation}
	This category can be seen as the linearization of $\left(\textbf{SmProj}/k\right)^{\text{op}}$. For example, for the projective line $\mathbb{P}^1$, we have 
	\begin{equation}
	\text{Corr}^0_{\sim}(\mathbb{P}^1,\mathbb{P}^1)_{\mathbb{Q}}=\mathbb{Q} \left(\mathbb{P}^1 \times \{ 0\} \right)+\mathbb{Q} \left( \{ 0\} \times \mathbb{P}^1 \right),
	\end{equation}
	i.e. it is the 2 dimensional vector space spanned by the horizontal line $e_2:=\mathbb{P}^1 \times \{ 0\} $ and the vertical line $ e_0:=\{ 0\} \times \mathbb{P}^1$. Notice that $e_0$ corresponds to the constant map from $\mathbb{P}^1$ to itself.
	\item Take the pseudo-abelianization of the category constructed in Step 1 and denote this new category by $\textbf{M}^{\text{eff}}_{\sim}(k,\mathbb{Q})$. More explicitly, the objects of $\textbf{M}^{\text{eff}}_{\sim}(k,\mathbb{Q})$ are formally
	\begin{equation}
	\{(h(X),e):X \in \textbf{SmProj}/k\,\text{and}\,\, e \in \text{Corr}^0_{\sim}(X,X)_{\mathbb{Q}} , \,e^2=e\},
	\end{equation}
	 and the morphisms between two objects are given by
	\begin{equation}
	\text{Hom}((h(X),e),(h(Y),f)):=f \circ \text{Corr}^0_{\sim}(X,Y)_{\mathbb{Q}} \circ e .
	\end{equation}
	The category constructed in Step 1 admits a natural embedding into $\textbf{M}^{\text{eff}}_{\sim}(k,\mathbb{Q})$, which comes from sending the object $h(X)$ to $(h(X), \Delta_X)$. Here $\Delta_X$ is the graph of the identity map of $X$, i.e. the diagonal of $X \times X$. The graph of the identity morphism of $\mathbb{P}^1$, i.e. $\Delta_{\mathbb{P}^1}$, is rationally equivalent to $e_0+e_2$. If we define 
    \begin{equation}
    h^0(\mathbb{P}^1):=(h(\mathbb{P}^1),e_0),~h^2(\mathbb{P}^1):=(h(\mathbb{P}^1),e_2),
    \end{equation}
    the object $(h(\mathbb{P}^1),\Delta_{\mathbb{P}^1})$ has a decomposition given by \cite{Andre, Scholl}
	\begin{equation}
	(h(\mathbb{P}^1),\Delta_{\mathbb{P}^1})=h^0(\mathbb{P}^1) \oplus h^2(\mathbb{P}^1).
	\end{equation}
	The component $h^0(\mathbb{P}^1)$ is also denoted by $\mathbb{Q}(0)$, while the component $h^2(\mathbb{P}^1)$ is also denoted by $\mathbb{Q}(-1)$. This is the motivic version of the property of the classical cohomology groups of $\mathbb{P}^1$. $(h(\mathbb{P}^1),e_0)$ is the zero-th cohomology because $e_0$ corresponds to the constant map of $\mathbb{P}^1$, which only preserves the zero-th cohomology while killing all higher degree cohomologies.
	
	\item  The category $\textbf{M}_{\sim}(k,\mathbb{Q})$ is constructed from $\textbf{M}^{\text{eff}}_{\sim}(k,\mathbb{Q})$ by inverting the object $\mathbb{Q}(-1)$. The objects of $\textbf{M}_{\sim}(k,\mathbb{Q})$ are formally
	\begin{equation}
	\{(h(X),e,m):X \in \textbf{SmProj}/k,\,\, e \in \text{Corr}^0_{\sim}(X,X)_{\mathbb{Q}} , \,e^2=e,\,\,\text{and}\,\,m \in \mathbb{Z}\},
	\end{equation}
	and the morphisms between two objects are given by
	\begin{equation}
	\text{Hom}((h(X),e,m),(h(Y),f,n)):=f \circ \text{Corr}^{n-m}_{\sim}(X,Y)_{\mathbb{Q}} \circ e.
	\end{equation}
	The category $\textbf{M}^{\text{eff}}_{\sim}(k,\mathbb{Q})$ is isomorphic to the full subcategory of $\textbf{M}_{\sim}(k,\mathbb{Q})$ generated by objects of the form $(h(X),e,0)$.
\end{enumerate}

Given two objects of $\textbf{M}_{\sim}(k,\mathbb{Q})$, the morphisms between them form a rational vector space, which is finite dimensional if $\sim$ is numerical equivalence. We can define a direct sum and direct product in $\textbf{M}_{\sim}(k,\mathbb{Q})$ by \cite{Scholl} \subeqs{(h(X),e,m) \oplus (h(Y),f,m)&:=(h(X \amalg Y),e \oplus f,m) \\ (h(X),e,m) \otimes (h(Y),f,n)&:=(h(X \times Y),e \times f,m+n),} respectively. We can also define the dual of an element of $\textbf{M}_{\sim}(k,\mathbb{Q})$ by
\begin{equation}
(h(X),e,m)^{\vee}:=(h(X),e^\top,\text{dim} \, X-m)
\end{equation}
where $e^\top$ means the transpose of $e$. In fact, the object $\mathbb{Q}(0)$ is a unit of $\textbf{M}_{\sim}(k,\mathbb{Q})$ \cite{Andre, Scholl}. The dual of $\mathbb{Q}(-1)$ is denoted by $\mathbb{Q}(1)$, and the Tate motive $\mathbb{Q}(m)$ is defined by
\begin{equation}
\mathbb{Q}(m):=
\begin{cases}
\mathbb{Q}(1)^m &\text{ if}~ m \geq 0, \\
\mathbb{Q}(-1)^{-m}        &~ \text{if}~m <0.
\end{cases}
\end{equation}
The Tate twist of a pure motive $M \in \textbf{M}_{\sim}(k,\mathbb{Q})$ by $\mathbb{Q}(m)$ is defined by
\begin{equation}
M(m):=M \otimes \mathbb{Q}(m).
\end{equation}
We have a further operation on the category $\textbf{M}_{\sim}(k,\mathbb{Q})$ called the extension of field. If the field $k'$ is an arbitrary extension of $k$, then $X \times_k k'$ is a non-singular variety defined over $k'$. After an extension of the base field from $k$ to $k'$ in the construction of $\textbf{M}_{\sim}(k,\mathbb{Q})$, we obtain a functor  
\begin{equation}
\textbf{M}_{\sim}(k,\mathbb{Q}) \times_k k' \rightarrow \textbf{M}_{\sim}(k',\mathbb{Q}).
\end{equation}
From the construction of $\textbf{M}_{\sim}(k,\mathbb{Q})$, there is a functor
\begin{equation}
h:\left( \textbf{SmProj}/k \right)^{\text{op}} \rightarrow \textbf{M}_{\sim}(k,\mathbb{Q}),
\end{equation}
which sends $X$ to $(h(X),\Delta_X,0)$ and $f: Y \rightarrow X$ to $\Gamma_f$.

\subsection{The Hodge Conjecture and Motivic Splits} \label{sec:hodgeconjecturereview}

\noindent We will now formulate the Hodge conjecture in the language of pure motives. From the construction of pure motives in Section \ref{sec:motives}, we deduce that every Weil cohomology theory $H^*$ automatically factors through $\textbf{M}_{\text{rat}}(k,\mathbb{Q})$:
\begin{equation}
\begin{tikzcd}
\left( \textbf{SmProj}/k \right)^{\text{op}} \arrow[rd,"H^*"'] \arrow[r, "h"] & \textbf{M}_{\text{rat}}(k,\mathbb{Q}) \arrow[d,dotted,"H^*_{\text{rat}}"] \\
& \text{Gr}^{\geq 0}\, \text{Vec}_K 
\end{tikzcd}.
\end{equation}
However, the category $\textbf{M}_{\text{rat}}(k,\mathbb{Q})$ is not abelian \cite{Andre, Scholl}. On the other hand, $\textbf{M}_{\text{num}}(k,\mathbb{Q})$ is known to be abelian and semi-simple \cite{Jannsen,Andre, Scholl}, but it is not known whether an arbitrary Weil cohomology theory $H^*$ will factor through $\textbf{M}_{\text{num}}(k,\mathbb{Q})$. If an algebraic cycle $\gamma$ is numerically equivalent to 0, then it will define a zero morphism in $\textbf{M}_{\text{num}}(k,\mathbb{Q})$. Thus, in order for $H^*$ to factor through $\textbf{M}_{\text{num}}(k,\mathbb{Q})$, we need the induced homomorphism $\gamma_*$ in the formula \eqref{eq:cyclemap} to be zero. This statement is historically known as Grothendieck's \textbf{Conjecture D}, and has not yet been proven \cite{Kleiman}.

\noindent \textbf{Conjecture D} \textit{If an algebraic cycle $\gamma$ is numerically equivalent to 0, then $\text{cl}(\gamma)$ is zero for every Weil cohomology theory.}

\noindent This conjecture also implies that the homological equivalence relation $\sim_{H^*}$ defined by a Weil cohomology theory $H^*$ is the same as numerical equivalence.

Assuming that \textbf{Conjecture D} holds, we have a functor 
\begin{equation} \label{eq:bettirealization}
H^*_{\text{B},\sigma}: \textbf{M}_{\text{num}}(k,\mathbb{Q}) \rightarrow \text{Gr}^{\geq 0}\, \text{Vec}_\mathbb{Q}
\end{equation}
In fact, since the Betti cohomology $H^i_{\text{B},\sigma}(X)$ is endowed with a pure Hodge structure, the functor in the formula \eqref{eq:bettirealization} is lifted to the functor
\begin{equation} \label{eq:hodgerealization}
\mathfrak{R}_{\sigma}: \textbf{M}_{\text{num}}(k,\mathbb{Q}) \rightarrow \textbf{HS}_{\mathbb{Q}},
\end{equation}
(with $\textbf{HS}_{\mathbb Q}$ as defined in \S1.2), which is called the Hodge realization functor \cite{Andre}. The Hodge conjecture can be succinctly stated as follows.

\noindent  \textit{For every algebraically closed field $F$ which admits an embedding $\sigma_F:F \hookrightarrow \mathbb{C}$, the Hodge realization functor
\begin{equation}
\mathfrak{R}_{\sigma_F}: \textbf{M}_{\text{num}}(F,\mathbb{Q}) \rightarrow \textbf{HS}_{\mathbb{Q}},
\end{equation}
is full-faithful.}

\noindent Assuming \textbf{Conjecture D} and the Hodge conjecture, we have an easy corollary that is crucial to this paper.
\begin{corollary} \label{splittingcorollary}
Suppose $k$ is a number field with an embedding $\sigma:k \rightarrow \mathbb{C}$ and $\mathbf{M}$ is a motive in $ \textbf{M}_{\text{num}}(k,\mathbb{Q}) $. If the Hodge realization of $\mathbf{M}$ splits into
\begin{equation}
\mathfrak{R}_{\sigma}(\mathbf{M})=H' \oplus H''
\end{equation}
in the category $ \textbf{HS}_{\mathbb{Q}}$, then there exists a field $k_1$, which is a finite extension of $k$, such that over $k_1$ we have
\begin{equation}
\mathbf{M}=\mathbf{M}' \oplus \mathbf{M}''~\text{in}~ \textbf{M}_{\text{num}}(k_1,\mathbb{Q}) ,~\mathfrak{R}_{\sigma_1}(\mathbf{M}')=H',~ \mathfrak{R}_{\sigma_1}(\mathbf{M}'')= H''.
\end{equation}
Here $\sigma_1:k_1 \hookrightarrow \mathbb{C}$ is an extension of $\sigma:k \hookrightarrow \mathbb{C}$.
\end{corollary}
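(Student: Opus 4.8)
\section*{Proof proposal}

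The plan is to reinterpret the Hodge-theoretic decomposition as an \emph{algebraic} idempotent and then track its field of definition. Giving a splitting $\mathfrak{R}_{\sigma}(\mathbf{M})=H'\oplus H''$ in $\textbf{HS}_{\mathbb{Q}}$ is the same as giving an idempotent $p'\in\operatorname{End}_{\textbf{HS}_{\mathbb{Q}}}\big(\mathfrak{R}_{\sigma}(\mathbf{M})\big)$ with image $H'$; set $p''=\mathrm{id}-p'$. First I would pass to the algebraic closure: writing $\mathbf{M}=(h(X),e,m)$ for a smooth projective $X/k$, base change to $\overline{\mathbb{Q}}$ to obtain $\mathbf{M}_{\overline{\mathbb{Q}}}\in\textbf{M}_{\text{num}}(\overline{\mathbb{Q}},\mathbb{Q})$, and fix an embedding $\sigma_F:\overline{\mathbb{Q}}\hookrightarrow\mathbb{C}$ extending $\sigma$. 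Since $(X\times_k\overline{\mathbb{Q}})\times_{\sigma_F}\mathbb{C}=X\times_{\sigma}\mathbb{C}$ as complex manifolds, the Hodge realization $\mathfrak{R}_{\sigma_F}(\mathbf{M}_{\overline{\mathbb{Q}}})$ is canonically the same Hodge structure $\mathfrak{R}_{\sigma}(\mathbf{M})$, with $e$ acting identically; more generally $\mathfrak{R}$ commutes with all the base changes appearing here.

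Next I would invoke the Hodge conjecture in the form stated above, together with \textbf{Conjecture D} (which is what makes $\textbf{M}_{\text{num}}$ the correct source category and the Hodge realization well defined on it): the functor $\mathfrak{R}_{\sigma_F}:\textbf{M}_{\text{num}}(\overline{\mathbb{Q}},\mathbb{Q})\to\textbf{HS}_{\mathbb{Q}}$ is full and faithful, so
\begin{equation}
\operatorname{End}_{\textbf{M}_{\text{num}}(\overline{\mathbb{Q}},\mathbb{Q})}\big(\mathbf{M}_{\overline{\mathbb{Q}}}\big)\ \xrightarrow{\ \sim\ }\ \operatorname{End}_{\textbf{HS}_{\mathbb{Q}}}\big(\mathfrak{R}_{\sigma}(\mathbf{M})\big)
\end{equation}
is an isomorphism of $\mathbb{Q}$-algebras. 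Pulling back $p'$ and $p''$ produces commuting idempotents $e',e''\in\operatorname{End}(\mathbf{M}_{\overline{\mathbb{Q}}})$ with $e'+e''=\mathrm{id}$ (idempotency and orthogonality are ring-theoretic and are preserved by a ring isomorphism). Because $\textbf{M}_{\text{num}}(\overline{\mathbb{Q}},\mathbb{Q})$ is pseudo-abelian, $e'$ and $e''$ split $\mathbf{M}_{\overline{\mathbb{Q}}}=\mathbf{M}'_{\overline{\mathbb{Q}}}\oplus\mathbf{M}''_{\overline{\mathbb{Q}}}$, and applying $\mathfrak{R}_{\sigma_F}$ recovers $H'\oplus H''$.

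It remains to descend to a finite extension $k_1/k$. Each morphism of $\textbf{M}_{\text{num}}$ is represented by an algebraic cycle on $X_{\overline{\mathbb{Q}}}\times X_{\overline{\mathbb{Q}}}$, i.e.\ a finite $\mathbb{Q}$-combination of subvarieties, each cut out by finitely many polynomials and hence defined over a number field; taking the compositum of these fields gives a single finite $k_1\supseteq k$ over which representing cycles for $e'$ and $e''$, and the relations $e'^2=e'$, $e'+e''=\mathrm{id}$, are all defined. Cleanly: $\operatorname{End}_{\textbf{M}_{\text{num}}(\overline{\mathbb{Q}},\mathbb{Q})}(\mathbf{M}_{\overline{\mathbb{Q}}})=\varinjlim_{k_1/k\ \text{finite}}\operatorname{End}_{\textbf{M}_{\text{num}}(k_1,\mathbb{Q})}(\mathbf{M}_{k_1})$ — surjectivity is the spreading-out just described, and the transition maps are injective because intersection numbers are insensitive to field extension, so a cycle already defined over $k_1$ that is numerically trivial over $\overline{\mathbb{Q}}$ is numerically trivial over $k_1$. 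In a filtered colimit of rings an idempotent lifts to a finite stage, so after enlarging $k_1$ we obtain idempotents $e'_{k_1},e''_{k_1}\in\operatorname{End}_{\textbf{M}_{\text{num}}(k_1,\mathbb{Q})}(\mathbf{M}_{k_1})$ mapping to $e',e''$; pseudo-abelianity of $\textbf{M}_{\text{num}}(k_1,\mathbb{Q})$ then splits $\mathbf{M}_{k_1}=\mathbf{M}'\oplus\mathbf{M}''$, and since $\mathfrak{R}_{\sigma_1}(e'_{k_1})=p'$ with $\sigma_1=\sigma_F|_{k_1}$ we get $\mathfrak{R}_{\sigma_1}(\mathbf{M}')=H'$ and $\mathfrak{R}_{\sigma_1}(\mathbf{M}'')=H''$, as claimed.

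The crux — and essentially the only place anything deep is used — is the full faithfulness of $\mathfrak{R}_{\sigma_F}$ over $\overline{\mathbb{Q}}$, i.e.\ the Hodge conjecture (plus \textbf{Conjecture D}): this is exactly what upgrades the purely Hodge-theoretic projector $p'$ to an algebraic correspondence. By contrast, the passage to a finite extension is soft — it is just the fact that algebraic cycles over $\overline{\mathbb{Q}}$ are defined over number fields — and the splitting of the resulting idempotents is formal from pseudo-abelianity. I would only flag the mild point that $k_1$ is not canonical: the projector exists over $\overline{\mathbb{Q}}$, but the minimal field over which the summand $\mathbf{M}'$ is defined is a genuine arithmetic invariant, and it is precisely this $k_1$ that plays the role of the splitting field ``$K'$'' in Section \ref{sec:main}.
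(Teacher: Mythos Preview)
Your proof is correct and follows essentially the same route as the paper's: translate the Hodge-theoretic splitting into an idempotent, invoke full faithfulness of the Hodge realization over the algebraic closure (via Hodge + Conjecture~D) to lift it to a motivic idempotent, split the motive, and then descend to a finite extension because the defining cycle involves only finitely many equations. Your treatment of the descent step (the filtered colimit of endomorphism rings and the invariance of intersection numbers under field extension) is more careful than the paper's, which simply asserts that the cycle is defined over some finite $k_1$, but the underlying argument is the same.
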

\begin{proof}
The split of $H=H' \oplus H''$ is given by an idempotent homomorphism $e_R:H \rightarrow H$. Let $\overline{\sigma}:\overline{k} \rightarrow \mathbb{C}$ be an extension of $\sigma$. From the Hodge Conjecture, the Hodge realization functor
\begin{equation}
\mathfrak{R}_{\overline{\sigma}}: \textbf{M}_{\text{num}}(\overline{k},\mathbb{Q}) \rightarrow \textbf{HS}_{\mathbb{Q}}
\end{equation}
is full-faithful, so that there exists an idempotent morphism $e:\mathbf{M} \rightarrow \mathbf{M}$ in the category $\textbf{M}_{\text{num}}(\overline{k},\mathbb{Q}) $. Since $\textbf{M}_{\text{num}}(\overline{k},\mathbb{Q})$ is an abelian category, the idempotent morphism $e$ yields a split
\begin{equation} \label{eq:splittingofM}
\mathbf{M}=\mathbf{M}' \oplus \mathbf{M}'',
\end{equation}
and from the construction of this split, we have 
\begin{equation}
\mathfrak{R}_{\overline{\sigma}}(\mathbf{M}')=H',~ \mathfrak{R}_{\overline{\sigma}}(\mathbf{M}'')= H''.
\end{equation}
This morphism $e$ is defined by an algebraic cycle over $\overline{k}$, and hence it is defined over a field $k_1$ that is a finite extension of $k$. Thus the split in the formula \eqref{eq:splittingofM} happens already in the category $ \textbf{M}_{\text{num}}(k_1,\mathbb{Q}) $, without needing to take the algebraic closure of $k$.
\end{proof}

We can now explain the relationship between the split of the $\ell$-adic cohomology of $X$ in  \eqref{eq:fluxEtaleSplit}, and the Hodge conjecture. If we assume Grothendieck's \textbf{Conjecture D} and the Hodge conjecture, then Corollary \ref{splittingcorollary} implies the existence of a number field $K_1$, a finite extension of $K$, such that in the category $\textbf{M}_{\text{num}}(K_1,\mathbb{Q})$ of pure motives over $K_1$, the pure motive $h^3(X)\times_K K_1$ splits into the direct sum
\begin{equation} \label{eq:splitofpuremotive}
h^3(X)\times_K K_1=\mathbf{M}_{\text{flux}} \oplus \mathbf{M}_{\text{remainder}},
\end{equation}
where the Hodge realization of $\mathbf{M}_{\text{flux}}$ (resp. $\textbf{M}_{\text{remainder}}$) is $\textbf{H}_{\text{flux}}$ (resp. $\textbf{H}_{\text{remainder}}$). The $\ell$-adic realization of this split immediately yields the split of the $\ell$-adic cohomology of $X$ in \eqref{eq:fluxEtaleSplit}. However, this argument depends critically on Grothendieck's \textbf{Conjecture D} and Hodge conjecture, which are two deep and unproven conjectures.

\section{Modularity of the Octic at Rational Points}
\label{sec:data}
In Table \ref{tab:summary}, we stated modularity results for several rational values of $\phi$. In this Appendix, we provide tables justifying these claims. For each choice of $\phi$, we find several good primes $p$. We fix a good  prime $p$, and reduce $\phi$ mod $p$. We then look up the corresponding $\zeta$-factor $R_0(t)$ for the mirror octic in the tables of \cite{kadir:octic}. In all cases we consider, $R_0(t)$ is a sextic in $t$, and factors over $\bbZ$ into either the product of a quartic and a quadratic or the product of three quadratics. Comparing each quadratic factor to Eq. \ref{eq:wt2factor}, we find one or more possible values $c_p^\zeta$. For $X_\phi$ to be modular, there must exist a weight-two eigenform $f_2$ whose $p$-th coefficient matches one of the $c_p^\zeta$. Comparing with a table of eigenforms, such as \cite{interactiveTable}, we have found a unique such eigenform for each of the five points considered here. Thus, we find strong numerical evidence of modularity in each example.

We now provide the data backing up these conclusions. For each value of $\phi$ listed in Table \ref{tab:summary}, we present a table showing the reductions of $\phi$ for each good prime, as well as the appropriate form of $R_0$. From these, we find values of $c_p^\zeta$, and from the $c_p^\zeta$ we find a weight-two Hecke eigenform, whose Fourier coefficients are also listed. Following \cite{kadir:octic}, we sometimes use the shorthand \subeqs{(a)_2 &= 1+at+p^3t^2 \\ (a,b)_4 &= 1+at+bt^2+ap^3t^2+p^6t^4.}

\subsection{$\phi=1/2$}
\begin{table}[H]
\begin{centering}
\begin{tabular}{|c|c|c|c|c|}\hline
$p$ & $\phi\mod{p}$ & $R_0(t)$ & $c_p^\zeta$ & $c_p$ \\\hline
5 & 3 & $(1+4t+125t^2)(1+10t+125t^2)^2$ & -2 & -2 \\\hline
7 & 4 &  $(1+343t^2)(1+28t+343t^2)(1-28t+343t^2)$ & $\pm$4 or 0 & 0 \\\hline
11 & 6 &  $(44)_2(0,-1210)_4 $ & -4 & -4 \\\hline
13 & 7 &  $(-6\times13)_2(2\times13)_2(-18)_2 $ &-2 or 6 & -2  \\\hline
17 & 9 & $(-136)_2(-34)_2(94)_2 $ & 2 or 8 &  2  \\\hline
\end{tabular}
\caption{Zeta functions for the mirror octic at the point $\psi=0,$ $\phi=1/2$, as computed in \cite{kadir:octic}. From each polynomial,  we have extracted the possible coefficients $c_p^\zeta$ that a weight-two eigenform must have to satisfy Eq. \ref{eq:wt2factor}.  These are displayed alongside the Fourier coefficients $c_p$ of the weight-two Hecke eigenform 48.2.a.a \cite{interactiveTable}. Note that for each $p$, the coefficients match exactly. Thus, we have found evidence for modularity.}
\label{tab:zeta1/2}
\end{centering}
\end{table}

\subsection{$\phi=3/5$}
\begin{table}[H]
\begin{centering}
\begin{tabular}{|c|c|c|c|c|}\hline
$p$ & $\phi\mod{p}$ & $R_0(t)$ & $c_p^\zeta$  & $c_p$\\\hline
3 & 0 & $(1+27t^2)(1-18t^2+729t^4)$ & 0 & 0 \\\hline
7 & 2 & $ (1+28t+343t^2)(0,294)_4$ & -4 & -4\\\hline
11 & 5 & $(44)_2(0,-1210)_4 $ & -4 & -4 \\\hline
13 & 11 & $(-2\times13)_2,(48,3770)_4 $ &  2 &  2  \\\hline
17 & 4 & $(-136)_2(34)_2(104)_2 $ & -2 or 8 & -2 \\\hline
\end{tabular}
\caption{Zeta functions for the mirror octic at the point $\psi=0,$ $\phi=3/5$, as computed in \cite{kadir:octic}. From each polynomial,  we have extracted the possible  coefficients $c_p^\zeta$ that a weight-two eigenform must have to satisfy Eq. \ref{eq:wt2factor}.  These are displayed alongside the Fourier coefficients $c_p$ of the weight-two Hecke eigenform 400.2.a.e \cite{interactiveTable}. Note that for each $p$, the coefficients match exactly. Thus, we have found evidence for modularity.}
\label{tab:zeta3/5}
\end{centering}
\end{table}

\subsection{$\phi = 11/8$} 
\begin{table}[H]
\begin{centering}
\begin{tabular}{|c|c|c|c|c|}\hline
5 & 2 & $(10)_2(-10)_2(-4)_2$ &  $\pm$ 2 & -2 \\\hline
7 & 4 & $(0)_2(28)_2(-28)_2$ & 0 or $\pm$ 4 & 0 \\\hline
11 & 0 & $(0)_2(0,1694)_4$ & 0 & 0 \\\hline
13 & 3 & $(-6\times13)_2(102,2\times3\times79\times13)_4$ & 6 & 6\\\hline
17 & 12 & $(102)_2(-46,8738)_4$ & -6 & -6 \\\hline
\end{tabular}
\caption{$\zeta$-function numerators for the mirror octic at the point $\psi=0,$ $\phi=11/8$, as computed in \cite{kadir:octic}. From each polynomial,  we have extracted the possible coefficients $c_p^\zeta$ that a weight-two eigenform must have to satisfy Eq. \ref{eq:wt2factor}.  These are displayed alongside the Fourier coefficients $c_p$ of the weight-two Hecke eigenform 912.2.a.b \cite{interactiveTable}. Note that for each $p$, the coefficients match exactly. Thus, we have found evidence for modularity.}
\label{tab:zeta11/8}
\end{centering}
\end{table}

\subsection{$\phi = 2$} 
\begin{table}[H]
\begin{centering}
\begin{tabular}{|c|c|c|c|c|}\hline
$p$ & $\phi\mod{p}$ & $R_0(t)$ & $c_p^\zeta$ & $c_p$ \\\hline
5 & 2 & $(1-10t+125t^2)(1+10t+125t^2)(1-4t+125t^2)$ & 2 or -2 & -2 \\\hline
7 & 2 & $(1+28t+343t^2)(1+294t^2 + 7^6t^4)$ & -4 & -4 \\\hline
11 & 2 & $(-44)_2(44)_2^2$ & 4  or -4 & -4 \\\hline
13 & 2 & $(-26)_2(48,3770)_4$ & 2 & 2 \\\hline
17 & 2 & $(102)_2(-14,-7582)_2$ & -6 & -6 \\\hline
\end{tabular}
\caption{$\zeta$-function numerators for the mirror octic at the point $\psi=0,$ $\phi=2$, as computed in \cite{kadir:octic}. From each polynomial,  we have extracted the possible coefficients $c_p^\zeta$ that a weight-two eigenform must have to satisfy Eq. \ref{eq:wt2factor}.  These are displayed alongside the Fourier coefficients $c_p$ of the weight-two Hecke eigenform 192.2.a.a \cite{interactiveTable}. Note that for each $p$, the coefficients match exactly. Thus, we have found evidence for modularity.}
\label{tab:zeta2}
\end{centering}
\end{table}

\subsection{$\phi=3$}

\begin{table}[H]
\begin{centering}
\begin{tabular}{|c|c|c|c|c|}\hline
$p$ & $\phi\mod{p}$ & $R_0(t)$ & $c_p^\zeta$ & $c_p$ \\\hline
3 & 0 & $(0)_2(0,-18)_4$ & 0 & 0 \\\hline
5 & 3 & $(4)_2(10)_2^2$ &-2 & -2 \\\hline
7 & 3 & $(0)_2(28)_2(-28)_2$ & 0 or $\pm$4 & 0 \\\hline
11 & 3 & $(0)_2(0,1694)_4$  & 0 & 0 \\\hline
13 & 3 & $(-6\times13)_2((102,628524)_4$ & 6 & 6\\\hline
17 & 3 & $(-34)_2(0,-5474)_4$ & 2 & 2 \\\hline\end{tabular}
\caption{Zeta functions for the mirror octic at the point $\psi=0,$ $\phi=3$, as computed in \cite{kadir:octic}. From each polynomial,  we have extracted the possible coefficients $c_p^\zeta$ that a weight-two eigenform must have to satisfy Eq. \ref{eq:wt2factor}.  These are displayed alongside the Fourier coefficients $c_p$ of the weight-two Hecke eigenform 32.2.a.a \cite{interactiveTable}. Note that for each $p$, the coefficients match exactly. Thus, we have found evidence for modularity.}
\label{tab:zeta3}
\end{centering}
\end{table}

\subsection{$\phi=7$}

\begin{table}[H]
\begin{centering}
\begin{tabular}{|c|c|c|c|c|}\hline
$p$ & $\phi\mod{p}$ & $R_0(t)$ & $c_p^\zeta$ & $c_p$ \\\hline
5 & 2 & $(-10)_2(10)_2(-4)_2$ & 0 or $\pm$ 2 & -2 \\\hline
7 & 0 & $(0)_2^3$ & 0 & 0\\\hline
11 & 7 & $(0)_2^2(-44)_2$ & 0 or 4 & 4 \\\hline
13 & 7 & $(-6\times13)_2(2\times13)_2(-18)_2$ & 6 or -2 & -2 \\\hline
17 & 7 & $(-34)_2^2(104)_2$ & 2 & 2 \\\hline
\end{tabular}
\caption{$\zeta$-function numerators for the mirror octic at the point $\psi=0,$ $\phi=7$, as computed in \cite{kadir:octic}. From each polynomial,  we have extracted the possible coefficients $c_p^\zeta$ that a weight-two eigenform must have to satisfy Eq. \ref{eq:wt2factor}.  These are displayed alongside the Fourier coefficients $c_p$ of the weight-two Hecke eigenform 24.2.a.a \cite{interactiveTable}. Note that for each $p$, the coefficients match exactly. Thus, we have found evidence for modularity.}
\label{tab:zeta7}
\end{centering}
\end{table}

\subsection{$\phi=9$}

\begin{table}[H]
\begin{centering}
\begin{tabular}{|c|c|c|c|c|}\hline
$p$ & $\phi\mod{p}$ & $R_0(t)$ & $c_p^\zeta$ & $c_p$ \\\hline
3 & 0 & $\left(1 + 3^3 t^2\right)\left(1 - 18t^2 + 3^6t^4\right)$ & 0 & 0 \\\hline
7 & 2 & $(1+28t+343t^2)(0,294)_4$ & -4 & -4 \\\hline
11 & 9 & $(-44)_2(44)_2^2$ & 4 or -4 & 4 \\\hline
13 & 9 & $(26)_2(42,-78)_4$ & -2 & -2 \\\hline
17 & 9 & $(-136)_2(-34)_2(94)_2$ & 2 or 8 &  2 \\\hline
\end{tabular}
\caption{$\zeta$-function numerators for the mirror octic at the point $\psi=0,$ $\phi=9$, as computed in \cite{kadir:octic}. From each polynomial,  we have extracted the possible coefficients $c_p^\zeta$ that a weight-two eigenform must have to satisfy Eq. \ref{eq:wt2factor}.  These are displayed alongside the Fourier coefficients $c_p$ of the weight-two Hecke eigenform 40.2.a.a \cite{interactiveTable}. Note that for each $p$, the coefficients match exactly. Thus, we have found evidence for modularity.}
\label{tab:zeta9}
\end{centering}
\end{table}

\bibliographystyle{hunsrt}
\bibliography{flux}

\begin{thebibliography}{10}

\bibitem{Langlands}
R.~P. Langlands.
\newblock Problems in the theory of automorphic forms.
\newblock In C.~T. Taam, editor, {\em Lectures in Modern Analysis and
  Applications III}, pages 18--61, Berlin, Heidelberg, 1970. Springer Berlin
  Heidelberg.

\bibitem{langlandsBook}
J.~Bernstein, S.S. Kudla, S.~Gelbart, E.~Kowalski, E.~de~Shalit, D.~Gaitsgory,
  J.W. Cogdell, and D.~Bump.
\newblock {\em {An Introduction to the Langlands Program}}.
\newblock Birkh{\"a}user Boston, 2013.

\bibitem{Diamond}
Fred Diamond.
\newblock {On Deformation Rings and Hecke Rings}.
\newblock {\em Annals of Mathematics}, 144(1):137--166, 1996.

\bibitem{Conrad}
Brian Conrad, Fred Diamond, and Richard Taylor.
\newblock {Modularity of Certain Potentially Barsotti-Tate Galois
  Representations}.
\newblock {\em Journal of the American Mathematical Society}, 12(2):521--567,
  1999.

\bibitem{Taylor2}
Richard Taylor and Andrew Wiles.
\newblock {Ring-Theoretic Properties of Certain Hecke Algebras}.
\newblock {\em Annals of Mathematics}, 141(3):553--572, 1995.

\bibitem{Wiles}
Andrew Wiles.
\newblock {Modular Elliptic Curves and Fermat's Last Theorem}.
\newblock {\em Annals of Mathematics}, 141(3):443--551, 1995.

\bibitem{Breuil}
Christophe Breuil, Brian Conrad, Fred Diamond, and Richard Taylor.
\newblock {On the Modularity of Elliptic Curves over Q: Wild 3-Adic Exercises}.
\newblock {\em Journal of the American Mathematical Society}, 14(4):843--939,
  2001.

\bibitem{Schutt}
Matthias Schuett.
\newblock {Fields of definition of singular K3 surfaces}, 2006, math/0612396.

\bibitem{livne}
Ron Livn{\'e}.
\newblock Motivic orthogonal two-dimensional representations of
  gal($\bar{\mathbb{q}}â\mathbb{Q}$).
\newblock {\em Israel Journal of Mathematics}, 92(1):149--156, Feb 1995.

\bibitem{yui:rigid}
Fernando~Q. {Gouvea} and Noriko {Yui}.
\newblock {Rigid Calabi-Yau Threefolds over Q Are Modular}.
\newblock {\em arXiv e-prints}, page arXiv:0902.1466, Feb 2009, 0902.1466.

\bibitem{rev1}
Mariana Grana.
\newblock {Flux compactifications in string theory: A Comprehensive review}.
\newblock {\em Phys. Rept.}, 423:91--158, 2006, hep-th/0509003.

\bibitem{rev2}
Michael~R. Douglas and Shamit Kachru.
\newblock {Flux compactification}.
\newblock {\em Rev. Mod. Phys.}, 79:733--796, 2007, hep-th/0610102.

\bibitem{rev3}
Frederik Denef, Michael~R. Douglas, and Shamit Kachru.
\newblock {Physics of String Flux Compactifications}.
\newblock {\em Ann. Rev. Nucl. Part. Sci.}, 57:119--144, 2007, hep-th/0701050.

\bibitem{rev4}
Frederik Denef.
\newblock {Les Houches Lectures on Constructing String Vacua}.
\newblock {\em Les Houches}, 87:483--610, 2008, 0803.1194.

\bibitem{candelas:attractors}
Philip Candelas, Xenia de~la Ossa, Mohamed Elmi, and Duco van Straten.
\newblock A one parameter family of calabi-yau manifolds with attractor points
  of rank two, 2019, 1912.06146.

\bibitem{dewolfe:11222}
Oliver DeWolfe.
\newblock {Enhanced symmetries in multiparameter flux vacua}.
\newblock {\em JHEP}, 10:066, 2005, hep-th/0506245.

\bibitem{kadir:octic}
Shabnam~Nargis Kadir.
\newblock {\em {The Arithmetic of Calabi-Yau manifolds and mirror symmetry}}.
\newblock PhD thesis, Oxford U., 2004, hep-th/0409202.

\bibitem{ribet:review}
Kenneth~A Ribet.
\newblock {Galois representations and modular forms}.
\newblock {\em Bulletin of the American Mathematical Society}, 32(4):375--402,
  1995.

\bibitem{meyer:book}
C.~Meyer.
\newblock {\em {Modular Calabi-Yau Threefolds}}.
\newblock Fields Institute monographs. American Mathematical Soc.

\bibitem{yui:review}
Noriko Yui.
\newblock {Modularity of Calabi--Yau varieties: 2011 and beyond}, 2012,
  1212.4308.

\bibitem{123}
K.~Ranestad, J.H. Bruinier, G.~van~der Geer, G.~Harder, and D.~Zagier.
\newblock {\em {The 1-2-3 of Modular Forms: Lectures at a Summer School in
  Nordfjordeid, Norway}}.
\newblock Universitext. Springer Berlin Heidelberg, 2008.

\bibitem{interactiveTable}
{LMFDB - The L-functions and Modular Forms Database}.
\newblock http://www.lmfdb.org.

\bibitem{Dwork}
Bernard Dwork.
\newblock On the rationality of the zeta function of an algebraic variety.
\newblock {\em American Journal of Mathematics}, 82(3):631--648, 1960.

\bibitem{DeligneWeil}
Pierre Deligne.
\newblock {La conjecture de Weil : I}.
\newblock {\em Publications Math\'ematiques de l'IH\'ES}, 43:273--307, 1974.

\bibitem{MilneEC}
J.S. Milne.
\newblock {\em {{\'E}tale Cohomology (PMS-33)}}.
\newblock Number v. 33 in Princeton Mathematical Series. Princeton University
  Press, 2016.

\bibitem{candelas:finite2}
Philip Candelas, Xenia de~la Ossa, and Fernando Rodriguez~Villegas.
\newblock {Calabi-Yau manifolds over finite fields. 2.}
\newblock {\em Fields Inst. Commun.}, 38:121--157, 2013, hep-th/0402133.

\bibitem{kachru:gkp}
Steven~B. Giddings, Shamit Kachru, and Joseph Polchinski.
\newblock {Hierarchies from Fluxes in String Compactifications}.
\newblock {\em Physical Review D}, 66(10), Nov 2002.

\bibitem{kachru:0411}
Oliver DeWolfe, Alexander Giryavets, Shamit Kachru, and Washington Taylor.
\newblock {Enumerating flux vacua with enhanced symmetries}.
\newblock {\em JHEP}, 02:037, 2005, hep-th/0411061.

\bibitem{becker:m8fold}
Katrin Becker and Melanie Becker.
\newblock {M theory on eight manifolds}.
\newblock {\em Nucl. Phys.}, B477:155--167, 1996, hep-th/9605053.

\bibitem{dasgupta:mGflux}
Keshav Dasgupta, Govindan Rajesh, and Savdeep Sethi.
\newblock {M theory, orientifolds and G - flux}.
\newblock {\em JHEP}, 08:023, 1999, hep-th/9908088.

\bibitem{sen:fourfold}
Ashoke Sen.
\newblock {Orientifold limit of F theory vacua}.
\newblock {\em Phys. Rev.}, D55:R7345--R7349, 1997, hep-th/9702165.

\bibitem{gukov:superpotential}
Sergei Gukov, Cumrun Vafa, and Edward Witten.
\newblock {CFT's from Calabi-Yau four folds}.
\newblock {\em Nucl. Phys.}, B584:69--108, 2000, hep-th/9906070.
\newblock [Erratum: Nucl. Phys.B608,477(2001)].

\bibitem{moore:a&along}
Gregory~W. Moore.
\newblock {Arithmetic and Attractors}.
\newblock 1998, hep-th/9807087.

\bibitem{moore:arithmeticLectures}
Gregory~W. Moore.
\newblock {Strings and Arithmetic}.
\newblock In {\em {Proceedings, Les Houches School of Physics: Frontiers in
  Number Theory, Physics and Geometry II: On Conformal Field Theories, Discrete
  Groups and Renormalization: Les Houches, France, March 9-21, 2003}}, pages
  303--359, 2007, hep-th/0401049.

\bibitem{consani2000geometry}
Caterina Consani and Jasper Scholten.
\newblock {Geometry and arithmetic on a quintic threefold}.
\newblock 2000, math/0009134.

\bibitem{schutt:hilbert}
Luis {Dieulefait}, Ariel {Pacetti}, and Matthias {Schuett}.
\newblock {Modularity of the Consani-Scholten quintic}.
\newblock {\em arXiv e-prints}, page arXiv:1005.4523, May 2010, 1005.4523.

\bibitem{straten:hilbert}
Slawomir {Cynk}, Matthias {Sch{\"u}tt}, and Duco {van Straten}.
\newblock {Hilbert modularity of some double octic Calabi--Yau threefolds}.
\newblock {\em arXiv e-prints}, page arXiv:1810.04495, Oct 2018, 1810.04495.

\bibitem{taylor:CMfields}
Patrick~B. {Allen}, Frank {Calegari}, Ana {Caraiani}, Toby {Gee}, David {Helm},
  Bao~V. {Le Hung}, James {Newton}, Peter {Scholze}, Richard {Taylor}, and
  Jack~A. {Thorne}.
\newblock {Potential Automorphy over CM Fields}.
\newblock {\em arXiv e-prints}, page arXiv:1812.09999, Dec 2018, 1812.09999.

\bibitem{densityTheorem}
P.~Stevenhagen and H.~W. Lenstra.
\newblock Chebotar{\"e}v and his density theorem.
\newblock {\em The Mathematical Intelligencer}, 18(2):26--37, 1996.

\bibitem{candelas:2param1}
Philip Candelas, Xenia De~La~Ossa, Anamaria Font, Sheldon~H. Katz, and David~R.
  Morrison.
\newblock {Mirror symmetry for two parameter models. 1.}
\newblock {\em Nucl. Phys.}, B416:481--538, 1994, hep-th/9308083.
\newblock [AMS/IP Stud. Adv. Math.1,483(1996)].

\bibitem{meyer:table2}
C.~Meyer.
\newblock {Newforms of weight two for $\Gamma_0(N)$ with rational
  coefficients}.
\newblock http://meyer-idstein.de/weight2.pdf.

\bibitem{rolf:genus3}
Monika Lynker and Rolf Schimmrigk.
\newblock {Geometric Kac-Moody modularity}.
\newblock {\em J. Geom. Phys.}, 56:843--863, 2006, hep-th/0410189.

\bibitem{verrill:ruled}
Klaus Hulek and Helena Verrill.
\newblock {On the modularity of Calabi-Yau threefolds containing elliptic ruled
  surfaces}, 2005, math/0502158.

\bibitem{kachru:fluxThreefolds}
Alexander Giryavets, Shamit Kachru, Prasanta~K Tripathy, and Sandip~P Trivedi.
\newblock {Flux Compactifications on Calabi-Yau Threefolds}.
\newblock {\em Journal of High Energy Physics}, 2004(04):003Ð003, Apr 2004.

\bibitem{lee:quinticModular}
Edward {Lee}.
\newblock {A modular quintic Calabi-Yau threefold of level 55}.
\newblock {\em arXiv e-prints}, page arXiv:0903.1140, Mar 2009, 0903.1140.

\bibitem{verrill:a4}
Klaus {Hulek} and Helena {Verrill}.
\newblock {On modularity of rigid and nonrigid Calabi-Yau varieties associated
  to the root lattice A\_4}.
\newblock {\em arXiv Mathematics e-prints}, page math/0304169, Apr 2003,
  math/0304169.

\bibitem{kondo:worldsheetLFunctions}
Satoshi Kondo and Taizan Watari.
\newblock {String-theory Realization of Modular Forms for Elliptic Curves with
  Complex Multiplication}.
\newblock {\em Commun. Math. Phys.}, 367(1):89--126, 2019, 1801.07464.

\bibitem{kondo:worldsheetLFunctions2}
Satoshi Kondo and Taizan Watari.
\newblock Modular parametrization as polyakov path integral: Cases with cm
  elliptic curves as target spaces, 2019, 1912.13294.

\bibitem{Kadir:2010dh}
Shabnam Kadir, Monika Lynker, and Rolf Schimmrigk.
\newblock {String Modular Phases in Calabi-Yau Families}.
\newblock {\em J. Geom. Phys.}, 61:2453--2469, 2011, 1012.5807.

\bibitem{candelas:finite1}
Philip Candelas, Xenia de~la Ossa, and Fernando Rodriguez-Villegas.
\newblock {Calabi-Yau manifolds over finite fields. 1.}
\newblock 2000, hep-th/0012233.

\bibitem{candelas:dwork}
Philip Candelas and Xenia de~la Ossa.
\newblock {The Zeta-Function of a p-Adic Manifold, Dwork Theory for
  Physicists}.
\newblock {\em Commun. Num. Theor. Phys.}, 1:479--512, 2007, 0705.2056.

\bibitem{GreenePlesser}
Brian~R. Greene and M.~R. Plesser.
\newblock {Duality in {Calabi-Yau} Moduli Space}.
\newblock {\em Nucl. Phys.}, B338:15--37, 1990.

\bibitem{Neukirch}
J.~Neukirch and N.~Schappacher.
\newblock {\em {Algebraic Number Theory}}.
\newblock Grundlehren der mathematischen Wissenschaften. Springer Berlin
  Heidelberg, 2013.

\bibitem{SerreLF}
J.P. Serre.
\newblock {\em {Local Fields}}.
\newblock Graduate Texts in Mathematics. Springer New York, 2013.

\bibitem{Taylor}
Richard Taylor.
\newblock {Galois Representations}, 2002, math/0212403.

\bibitem{PetersSteenbrink}
C.A.M. Peters and J.H.M. Steenbrink.
\newblock {\em {Mixed Hodge Structures}}.
\newblock Ergebnisse der Mathematik und ihrer Grenzgebiete. 3. Folge / A Series
  of Modern Surveys in Mathematics. Springer Berlin Heidelberg, 2008.

\bibitem{Hartshorne}
R.~Hartshorne.
\newblock {\em {Algebraic Geometry}}.
\newblock Graduate Texts in Mathematics. Springer New York, 2013.

\bibitem{Voisin}
C.~Voisin and L.~Schneps.
\newblock {\em {Hodge Theory and Complex Algebraic Geometry I}}.
\newblock Number v. 1 in Cambridge Studies in Advanced Mathematics. Cambridge
  University Press, 2002.

\bibitem{Nekovar}
Jan Nekov{\'a}r.
\newblock {BeilinsonÕs conjectures}.
\newblock {\em Motives}, 55(1):537--570, 1994.

\bibitem{FultonI}
William Fulton.
\newblock {\em {Intersection Theory}}, volume~2.
\newblock Springer Science \& Business Media, 2013.

\bibitem{Andre}
Yves Andr{\'e}.
\newblock {Une Introduction Aux Motifs: Motifs Purs, Motifs Mixtes,
  P{\'e}riodes}.
\newblock {\em Panoramas et synth{\`e}ses-Soci{\'e}t{\'e} math{\'e}matique de
  France}, (17):1--258, 2004.

\bibitem{Scholl}
Anthony~J Scholl.
\newblock {Classical Motives}.
\newblock In {\em Proc. Symp. Pure Math}, volume~55, pages 163--187, 1994.

\bibitem{Jannsen}
Uwe Jannsen.
\newblock {Motives, Numerical Equivalence, and Semi-Simplicity}.
\newblock {\em Inventiones mathematicae}, 107(1):447--452, 1992.

\bibitem{Kleiman}
Steven~L Kleiman.
\newblock {The Standard Conjectures}.
\newblock {\em Motives (Seattle, WA, 1991)}, 55:3--20, 1994.

\end{thebibliography}

\end{document}